\newcommand{\red}[1]{\textbf{\textcolor{red}{#1}}}
\newcommand{\blue}[1]{\textbf{\textcolor{blue}{#1}}}
\newcommand{\preord}{\hat}
\newcommand{\wheel}{\bar}
\newcommand{\nblocks}{r'}
\begin{document}
\title{On Locating Paths in Compressed Tries}
%
%
\author{Nicola Prezza
\orcidID{0000-0003-3553-4953}} 
%
\authorrunning{N. Prezza}
%
\institute{Ca' Foscari University of Venice, Italy
\email{nicola.prezza@unive.it}}
\maketitle              
\begin{abstract}
In this paper, we consider the problem of compressing a trie while supporting the powerful \emph{locate} queries: to return the pre-order identifiers of all nodes reached by a path labeled with a given query pattern. Our result builds on top of the XBWT tree transform of Ferragina et al. [FOCS 2005] and generalizes the \emph{r-index} locate machinery of Gagie et al. [SODA 2018, JACM 2020] based on the run-length encoded Burrows-Wheeler transform (BWT). Our first contribution is to propose a suitable generalization of the run-length BWT to tries. We show that this natural generalization enjoys several of the useful properties of its counterpart on strings: in particular, the transform natively supports counting occurrences of a query pattern on the trie's paths and its size $r$ captures the trie's repetitiveness and lower-bounds a natural notion of trie entropy. 
Our main contribution is a much deeper insight into the combinatorial structure of this object. In detail, we show that a data structure of $O(r\log n) + 2n + o(n)$ bits, where $n$ is the number of nodes, allows locating the $occ$ occurrences of a pattern of length $m$ in nearly-optimal $O(m\log\sigma + occ)$ time, where $\sigma$ is the alphabet's size. Our solution consists in sampling $O(r)$ nodes that can be used as "anchor points" during the locate process. Once obtained the pre-order identifier of the first pattern occurrence (in co-lexicographic order), we show that a constant number of constant-time jumps between those anchor points lead to the identifier of the next pattern occurrence, thus enabling locating in optimal $O(1)$ time per occurrence. 
\keywords{Tree Compression  \and Tree Indexing \and Burrows-Wheeler Transform}
\end{abstract}

\newpage

\clearpage
\pagenumbering{arabic} 

\section{Introduction}
	
A compressed text index is a data structure representing a text $T$ within compressed space and supporting fast \emph{count} and \emph{locate} queries: given a query pattern, count/return all positions in $T$ where the pattern occurs~\cite{survey}. 
The first compressed indexes date back twenty years and operate within a space bounded by the text's empirical entropy~\cite{FMI,CSA}. 
Entropy, however, does not capture long repetitions: entropy-compressing $T\cdot T$ yields an archive twice as big as the entropy-compressed $T$~\cite{KN13}. For this reason, in recent years more powerful compressed indexes have emerged; these are based on the Lempel-Ziv factorization~\cite{KN13}, the run-length Burrows-Wheeler Transform (BWT)\cite{r-index,RLFM,MNSV08}, context-free grammars~\cite{CNspire12} and, more recently, string attractors~\cite{navarro2019universal,attractors}. 
In this domain, the state of the art compressed index is represented by the so-called \emph{r-index} of Gagie et al. \cite{r-index}. This index takes a space proportional to the number $r$ of equal-letter runs in the BWT and  locates pattern occurrences in log-logarithmic time each, being orders of magnitude faster than all comparably-small alternatives  in practice.
On trees, the state of the art is much less mature.  While some of the above compression techniques have been extended to trees~\cite{gawrychowski2016lz77,busatto2008efficient}, less is known about tree indexing within compressed space. 
Ferragina et al.~\cite{XBWT} have been the first to 
tackle the tree indexing problem: their \emph{XBW Transform} (XBWT in the following) stores any labeled tree within entropy-compressed space while also supporting fast \emph{count} queries on it. Crucially, they did not discuss how to \emph{locate} paths labeled with a given pattern. In this setting, a natural generalization of the problem is to return the pre-order identifier of all nodes reached by the query pattern. \vspace{3pt} \newline
\textbf{Our Contributions}
In this paper, we show for the first time 
how to support the powerful locate queries on \emph{compressed} tries.
To begin with, we generalize the notion of run-length encoding to the XBWT of a trie and show that the number $r$ of runs in the XBWT is a valid compressibility measure as it captures the trie's repetitiveness and it lower bounds the $k$-th order worst-case entropy $\mathcal H^{wc}_k$ of the trie. 
Our main contribution is a deep insight into the combinatorial structure of the run-length XBWT and leads to a neat (and nontrivial) generalization of the r-index to tries. 
We first observe that the standard sampling mechanism of compressed suffix arrays can easily be extended to the XBWT. This simple solution, however, requires also a sampling of $O((n/t)\log n) + o(n)$ bits \emph{on top of the XBWT} to support $\tilde O(t)$-time locate queries.
The problem with this sampling is that it does not depend on the structural properties of the underlying trie. We show that it is indeed possible to design a more advanced sampling mechanism that depends on the combinatorial properties of the XBWT.
In detail, our machinery uses a repetition-aware sampling of size $O(r)$ and locates nodes in two steps: (1) during the counting process, we locate the pre-order identifier of the co-lexicographically smallest node $u$ reached by the query pattern, and (2) we show that a constant number of constant-time "jumps" between the sampled nodes is sufficient to locate the co-lexicographic successor of $u$. By repeating this process $occ$ times ($occ$ being the number of pattern's occurrences), we manage to locate all pattern occurrences in constant time each. 
Our data structure takes $O(r\log n) + 2n + o(n)$ bits of space, where the linear overhead is required to support constant-time queries on the trie's topology. 
While we focus on tries only, our results can be generalized to arbitrary labeled trees; since the primary goal of this paper is to provide a useful combinatorial insight into the run-length XBWT, we preferred to stick to the trie case which is simpler to introduce. A natural improvement over our work would be to compress the tree topology within $O(r\log n)$ bits of space as well. We believe that this should be possible by unveiling further combinatorial properties of the run-length XBWT.


\section{Definitions}

We work with edge-labeled tries $\mathcal T = (V,E)$ with $n$ nodes and labels from alphabet $\Sigma = \{1,\dots,\sigma\}$ ordered by a total order $\prec$. 
We extend  $\prec$ to $\Sigma^*$ using the co-lexicographic order (i.e. the strings' characters are compared right-to-left).
Given a string $S$, the number $rle(S)$ of equal-letter runs of $S$ is the number of maximal unary substrings of $S$ (for example, $rle(aaabbccaaa) = 4$).
We identify tree nodes by their pre-order identifier $\preord u$; node 1 is the root.
Function $\pi(\preord u)$ returns the parent of node $\preord u$, and $\lambda(\preord u)$ indicates the label of the edge $(\pi(\preord u),\preord u)$. For the root, we take $\lambda(1) = \#$, where $\#$ is the lexicographically-smallest character in $\Sigma$, not labeling any edge.
Notation $\lambda(\Pi)$ denotes the string $\lambda(\preord u)\cdots \lambda(\preord v)$ labeling path  $\Pi = \preord u \rightsquigarrow \preord v$.
We assume the alphabet to be \emph{effective}: for each $c\in\Sigma$, there exists $\preord u$ such that $\lambda(\preord u) = c$.
Function $child_c(\preord u)$ returns the child of $\preord u$ reached by following the edge labeled $c$. 
If $\preord u$ does not have such a child, then $child_c(\preord u) = \bot$. 
We consider the children of each node to be implicitly sorted according to their incoming labels.
Function $out(\preord u)$ returns the (possibly empty) set $\{ c\ :\ child_c(\preord u)\neq \bot \}$ of the characters labeling the outgoing edges of $\preord u$.
Let $U\subseteq V$. The forest $\mathcal T(U)$ is the set of the subtrees of $\mathcal T$ induced by $U$. We say that $\mathcal T(U)$ is a \emph{subtree} if it is connected. A subtree $\mathcal T(U)$ with root $\preord u$ is \emph{complete} if $U$ contains all descendants of $\preord u$ in $\mathcal T$. 
The equivalence relation $\approx$ denotes isomorphism between (the complete subtrees rooted in) two nodes: $\preord u \approx \preord v$ if and only if, for each $c\in\Sigma$, $child_c(\preord u) \approx child_c(\preord v)$, where $\preord u \approx \bot$ if and only if $\preord u = \bot$.
In some of our results we will treat trees as deterministic finite state automata (DFA), with the root being the initial state and all states being final.
We work in the word RAM model with words of size $w = \Theta(\log n)$ bits. The space of our data structures will be given either in words or bits; in all cases we will clearly specify which unit of measurement we use.

\section{The Run-Length Encoded XBWT}\label{sec:XBWT}

We start our discussion with the problem of compressing tries.
Our solution is obtained by extending run-length encoding to the XBWT of Ferragina et al. \cite{XBWT-FOCS,XBWT}.
While the results presented in this section are generalizations of known constructions from strings to tries, they give us the basis for introducing our main contribution in the next section: a run-length compressed index for tries.

The XBWT is based on the idea of sorting \emph{co-lexicographically} the $n$ tree's nodes: we declare $\preord u < \preord v$ if and only if $\lambda(1 \rightsquigarrow \preord u) \prec \lambda(1 \rightsquigarrow \preord v)$.
Equivalently, the order $<$ satisfies the following two \emph{co-lexicographic axioms}: (i) if $\lambda(\preord u) \prec \lambda(\preord v)$ then $\preord u<\preord v$, and (ii)  if $\lambda(\preord u) = \lambda(\preord v)$ and $\pi(\preord u) < \pi(\preord v)$, then $\preord u<\preord v$. 
We will call $<$ the \emph{co-lexicographic} order of the tree's nodes.
Let $\preord u_1 < \dots < \preord u_n$ be the sorted sequence of nodes. 
With $<_{pred}$ we denote the predecessor relation with respect to $<$: $\preord u_i <_{pred} \preord u_j$ if and only if $j = i+1$.
The subscripts in nodes $\preord u_1 < \dots < \preord u_n$ are the second node representation we will use in the paper: the co-lexicographic (co-lex for brevity) representation $\wheel u$ of (pre-order) node $\preord u_i$ is precisely $\wheel u = i$. 



We now give a definition of the XBWT that (on tries) is completely equivalent to the original one given by Ferragina et al.~\cite{XBWT}. See Figures \ref{fig:example1} and \ref{fig:example} for a running example.

\begin{definition}[\cite{XBWT}]\label{def:XBWT}
 $XBWT(\mathcal T) = out(\preord u_1), out(\preord u_2), \dots, out(\preord u_n)$.
\end{definition}

For brevity, we shall simply write $XBWT$ instead of $XBWT(\mathcal T)$. The original trie $\mathcal T$ can be reconstructed from $XBWT$~\cite{XBWT}. 

\begin{figure}
		\centering
		\tikzset{every state/.style={minimum size=16pt}}
		\begin{tikzpicture}[->,>=stealth', semithick, auto, scale=.7]
		\scriptsize
		\node[state] (root)    at (0,0)		{$\blue 1$};
		\node[state] (a)    at (-5,-1.5)		{$2$};

		\node[state] (aa)    at (-9,-3)		{$\red 3$};
		\node[state] (ab)    at (-5.5,-3)		{$\red{14}$};
		\node[state] (ac)    at (-3,-3)		{${18}$};

		\node[state] (acb)    at (-3.5,-4.5) {${19}$};
		\node[state] (acc)    at (-2.4,-4.5) {${20}$};

		\node[state] (acca)    at (-2.4,-6) {$\red{21}$};

		\node[state] (aba)    at (-6.1,-4.5)		{$\red{15}$};
		\node[state] (abc)    at (-4.9,-4.5)		{$\red{17}$};

		\node[state] (abab)    at (-6.1,-6)		{${16}$};

		\node[state] (aaa)    at (-10,-4.5)		{$4$};
		\node[state] (aab)    at (-9,-4.5)		{$6$};
		\node[state] (aac)    at (-8,-4.5)		{$\red7$};

		\node[state] (aaab)    at (-10,-6)		{$ 5$};

		\node[state] (aacb)    at (-8.5,-6)		{$\blue 8$};
		\node[state] (aacc)    at (-7.5,-6)		{$9$};

		\node[state] (aacca)    at (-7.5,-7.5)		{$\blue{10}$};

		\node[state] (aaccaa)    at (-8,-9)		{${11}$};
		\node[state] (aaccac)    at (-7,-9)		{${13}$};

		\node[state] (aaccaab)    at (-8,-10.5)		{${12}$};

		\node[state] (ba)    at (-1,-3)		{${23}$};
		\node[state] (bc)    at (1,-3)		{${25}$};

		\node[state] (bab)    at (-1,-4.5)	{${24}$};

		\node[state] (b)    at (0,-1.5)		{${22}$};
		\node[state] (c)    at (4,-1.5)		{$\red{26}$};
        \draw (root) edge [bend right=15, above] node {a} (a);
        \draw (root) edge [] node {b} (b);
        \draw (root) edge [bend left=15, above] node {c} (c);

        \draw (a) edge [bend right=10, above] node {a} (aa);
        \draw (a) edge [] node {b} (ab);
        \draw (a) edge [bend left=10, above] node {c} (ac);

        \draw (b) edge [bend right=10, above] node {\hspace{-3pt}a} (ba);
        \draw (b) edge [bend left=10, above] node {\hspace{3pt}c} (bc);

        \draw (ab) edge [bend right=10, above,red] node {\hspace{-4pt}\red a} (aba);
        \draw (ab) edge [bend left=10, above,red] node {\hspace{4pt}\red c} (abc);
        
        \draw (ba) edge [] node {b} (bab);

        \draw (aba) edge [red] node {\red b} (abab);

        \draw (aa) edge [bend right=10, above,red] node {\hspace{-3pt}\red a} (aaa);
        \draw (aa) edge [] node {b} (aab);
        \draw (aa) edge [bend left=10, above,red] node {\hspace{3pt}\red c} (aac);

        \draw (aaa) edge [] node {b} (aaab);

        \draw (aac) edge [bend right=10, above,red] node {\hspace{-4pt}\red b} (aacb);
        \draw (aac) edge [bend left=10, above,red] node {\hspace{4pt}\red c} (aacc);
        
        \draw (ac) edge [bend right=10, above] node {\hspace{-4pt}b} (acb);
        \draw (ac) edge [bend left=10, above] node {\hspace{4pt}c} (acc);
  
         \draw (acc) edge [] node {\hspace{4pt}a} (acca);
          
        \draw (aacc) edge [red] node {\red a} (aacca);
    
        \draw (aacca) edge [bend right=10, above] node {\hspace{-4pt}a} (aaccaa);
        
        \draw (aacca) edge [bend left=10, above] node {\hspace{4pt}c} (aaccac);
        
        \draw (aaccaa) edge [] node {b} (aaccaab);
        
        \draw (aacb) edge [dashed,orange,bend right=50] node {} (c);
        \draw (aa) edge [dashed,orange,bend right=80] node {} (aaa);
        \draw (aba) edge [dashed,orange,bend right=30] node {} (acca);
        \draw (acca) edge [dashed,orange,bend left=30] node {} (aacca);
        \draw (aacca) edge [dashed,orange,bend right=70] node {} (b);
        \draw (ab) edge [dashed,orange,bend right=30] node {} (aab);
        \draw (c) edge [dashed,orange,bend right=30] node {} (ac);
        \draw (aac) edge [dashed,orange,bend left=40] node {} (aaccac);
        \draw (abc) edge [dashed,orange,bend right=50] node {} (acc);
        \draw (aaa) edge [dashed,orange, bend right=70] node {} (aaccaa);
        \draw (abab) edge [dashed,orange, bend right=20] node {} (acb);
        \draw (root) edge [dashed,orange, bend right=40] node {} (a);

        \end{tikzpicture}\caption{Running example used throughout the paper. This repetitive trie has $n=26$ nodes (numbered in pre-order) and labels from the alphabet $\Sigma = $\{a,b,c\}. 
        The trie's topology, the colored nodes and the orange dashed edges are a concise representation of our compressed trie index discussed in Section \ref{sec:locate}. These components are discussed more in detail in the caption of Figure \ref{fig:example}.}\label{fig:example1}
    \end{figure}
        
    \begin{figure}
        \centering
        \setlength{\tabcolsep}{2.5pt}
        \renewcommand{\arraystretch}{1.15}
        \begin{tabular}{|r|r|c|c|c|c|c|c|c|c|c|c|c|c|c|c|c|c|c|c|c|c|c|c|c|c|c|c|}\hline
            \multicolumn{2}{|c|}{$V$ (co-lex order)}&                 1 & 2 & 3 & 4 & 5 & 6 & 7 & 8 & 9 & 10 & 11&12&13&14&15&16&17&18&19&20&21&22&23&24&25&26\\\hline
            \multicolumn{2}{|c|}{$V$ (pre-order)}&                 \blue 1 & 2 & \red 3 & 4 & 11 & 23 & \red{15} & \red{21} & \blue{10} & 22 & \red{ 14}&6&5&12&24&16&19&\blue{8}&\red{26}&18&\red 7&13&25&\red{17}&20&9\\\hline
            \multicolumn{2}{|c|}{$\lambda$}&  $\#$&a&a&a&a&a&a&a&a&b&b&b&b&b&b&b&b&b&c&c&c&c&c&c&c&c \\\hline
            \multicolumn{2}{|c|}{}&                      a&a&\red a&&&&&&a&a&\red a&&&&&&&&&&&&&&a&\red a\\
            \multicolumn{2}{|c|}{XBWT ($out$)}&  b&b&b&b&b&b&\red b&&&&&&&&&&&&&b&\red b&&&&&\\
            \multicolumn{2}{|c|}{}&                      c&c&\red c&&&&&&c&c&\red c&&&&&&&&&c&\red c&&&&&\\ \hline
            & $DEL$ &               \multicolumn{3}{c|}{$\emptyset$}&\multicolumn{4}{c|}{\{a,c\}}&\{b\}&\multicolumn{3}{c|}{$\emptyset$}&\multicolumn{8}{c|}{\{a,c\}}&\multicolumn{2}{c|}{$\emptyset$}&\multicolumn{3}{c|}{\{b,c\}}&\multicolumn{2}{c|}{$\emptyset$}\\
            RL-XBWT       & $ADD$ & \multicolumn{3}{c|}{\{a,b,c\}}&\multicolumn{4}{c|}{$\emptyset$}&$\emptyset$&\multicolumn{3}{c|}{\{a,c\}}&\multicolumn{8}{c|}{$\emptyset$}&\multicolumn{2}{c|}{\{b,c\}}&\multicolumn{3}{c|}{$\emptyset$}&\multicolumn{2}{c|}{\{a\}}\\
            & $\ell$ &              \multicolumn{3}{c|}{3}&\multicolumn{4}{c|}{4}&1&\multicolumn{3}{c|}{3}&\multicolumn{8}{c|}{8}&\multicolumn{2}{c|}{2}&\multicolumn{3}{c|}{3}&\multicolumn{2}{c|}{2}\\\hline
            \multicolumn{2}{|c|}{$V/_{\equiv^r_<}$} & \multicolumn{3}{c|}{$\tilde{3}$}&\multicolumn{4}{c|}{$\tilde{15}$}&$\tilde{21}$&\multicolumn{3}{c|}{$\tilde{14}$}&\multicolumn{8}{c|}{$\tilde{26}$}&\multicolumn{2}{c|}{$\tilde{7}$}&\multicolumn{3}{c|}{$\tilde{17}$}&\multicolumn{2}{c|}{$\tilde{9}$}\\\hline
            \multicolumn{2}{|c|}{$V/_{\approx_<}$} & $\tilde{1}$&$\tilde{2}$&$\tilde{3}$&\multicolumn{4}{c|}{$\tilde{15}$}&$\tilde{21}$&\multicolumn{3}{c|}{$\tilde{14}$}&\multicolumn{8}{c|}{$\tilde{26}$}&$\tilde{18}$&$\tilde{7}$&\multicolumn{3}{c|}{$\tilde{17}$}&$\tilde{20}$&$\tilde{9}$\\\hline
            \multicolumn{2}{|c|}{$V/_{\equiv_<}$} & $\tilde{1}$&$\tilde{2}$&$\tilde{3}$&\multicolumn{4}{c|}{$\tilde{15}$}&$\tilde{21}$&$\tilde{10}$&\multicolumn{2}{c|}{$\tilde{14}$}&\multicolumn{7}{c|}{$\tilde{8}$}&$\tilde{26}$&$\tilde{18}$&$\tilde{7}$&\multicolumn{3}{c|}{$\tilde{17}$}&$\tilde{20}$&$\tilde{9}$\\\hline
        \end{tabular}
	\caption{
	\textbf{XBWT (Subsection \ref{sec:XBWT})}. First four rows: (1) the co-lex order and (2) the pre-order identifiers of the nodes of the tree in Figure \ref{fig:example1}, (3) the incoming label $\lambda(\preord u)$ of each node $\preord u\in V$, and (4) for each node, the characters labeling its outgoing edges.
	Row (4) is the XBWT of the tree. 
	\textbf{RL-XBWT (Subsection \ref{sec:XBWT})}. Fifth row: 
	the transform has $\nblocks=8$ blocks and $r=8$ runs. For each block, we store (1) the set $DEL$ of characters deleted w.r.t. the previous block (colored in red in the fourth row: these are the $c$-runs), (2) the set $ADD$ of characters added w.r.t. the previous block, and (3) the length $\ell$ (number of nodes) of the block. 
	\textbf{Wheeler Automata (Appendix \ref{sec:WA})}. In the last three rows of the table, we show the three quotients $V/_{\equiv^r_<}$ (RL-XBWT blocks), $V/_{\approx_<}$ (convex isomorphism), and $V/_{\equiv_<}$ (states of the minimum equivalent WDFA, by~\cite[Thm 4.1]{SODA20}). 
	The notation $\tilde j$ indicates the equivalence class of node $\preord j$.
	\textbf{Tree attractors (Appendix \ref{sec:tree attr})}. 
	By Theorem \ref{th:Gamma^r is tree attr}, the red edges (fourth row) form a tree attractor: every subtree has an isomorphic occurrence crossing a red edge. 
	\textbf{Locate (Section \ref{sec:locate})}. A red node has different outgoing labels w.r.t. its co-lexicographic successor. A blue node has a different incoming label w.r.t. its co-lexicographic successor. Orange dashed arrows in Figure \ref{fig:example1} represent the sampled values of the co-lexicographic successor function: $\phi(\preord u_i) = \preord u_{i+1}$. 
	These arrows depart from red nodes and from nodes reached by red edges.
	}\label{fig:example}
\end{figure}

It is well known that the number of equal-letter runs in the Burrows-Wheeler Transform (BWT) of a string~\cite{burrows1994block} (that is, the XBWT of a simple labeled path) is highly correlated with the string's repetitiveness~\cite{MNSV08,kempa2019resolution}. 
As a result, the \emph{run-length encoded BWT} is a very powerful compressor for repetitive strings (see also~\cite{r-index}).
We now extend this technique to the XBWT of a trie and show that it enjoys many of the useful properties of the run-length encoded BWT. 

We say that $1\leq i < n$ is a $c$-run break, with $c\in \Sigma$, if $c\in out(\preord u_i)$ and either (i) $i=n$ or (ii) $c\notin out(\preord u_{i+1})$.
When $c$ is not specified, we simply say that $i$ is a \emph{run-break} (for some $c$).
Let $r_c(\mathcal T)$ be the number of $c$-run breaks.
We define the \emph{number $r(\mathcal T)$ of XBWT runs} as $r(\mathcal T) = \sum_{c\in\Sigma} r_c(\mathcal T)$. For brevity, in the following we will omit $\mathcal T$ and simply write $r_c$ and $r$.
The fourth row of Figure \ref{fig:example} shows run breaks in red. In the figure, we have $r=8$.
If $\mathcal T$ is a path (that is, a string), then $r$ coincides with the number of equal-letter runs in the BWT of $\mathcal T$.

In the following definition we present the run-length (RL) encoded XBWT. See Figure \ref{fig:example} for a running example.
Importantly, note the distinction between XBWT \emph{runs} and \emph{blocks}.

\begin{definition}\label{def:RL-BWT}
The RL-XBWT of a trie $\mathcal T$ is the sequence of $\nblocks$ triples
$\langle (ADD_i,DEL_i,\ell_i) \rangle_{i=1}^{r'}$
obtained as follows. Break the sequence $\preord u_1, \preord u_2, \dots, \preord u_n$ into maximal contiguous \emph{blocks} 
such that the nodes in the same block $\preord u_{i}, \preord u_{i+1}, \dots, \preord u_{i+\ell-1}$ 
satisfy $out(\preord u_j) = out(\preord u_{j'})$, for all $i\leq j,j'< i+\ell$.
Only for the sake of this definition, let 
$out(\preord u_0) = \emptyset$.
The $q$-th block, starting with node $\preord u_{i_q}$ is then encoded with the triple $(ADD_q,DEL_q,\ell_q)$, where 
$
ADD_q = out(\preord u_{i_q}) - out(\preord u_{i_q-1})
$, 
$
DEL_q = out(\preord u_{i_q-1}) - out(\preord u_{i_q})
$, 
and $\ell_q$ is the length (number of nodes) of the block.
\end{definition}

The representation of Definition \ref{def:RL-BWT} is sufficient to reconstruct the XBWT:
$out(\preord u_{i_q}) = (out(\preord u_{i_q-1}) - DEL_{q}) \cup ADD_q$. In the next lemma we show that our representation can be stored in $O(r)$ space:

\begin{lemma}\label{lem:RL-BWT}
The RL-XBWT representation takes $O(r)$ words to be stored. 
\end{lemma}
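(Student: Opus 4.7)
The plan is to give a double accounting argument: bound the total size of the sets $ADD_q$ and $DEL_q$ against $r$, then bound the number of blocks $r'$ itself against $r$, since each block is stored by a constant number of words plus the size of its two sets.

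First I would argue $\sum_{q=1}^{r'} |ADD_q| \le r$ and $\sum_{q=1}^{r'} |DEL_q| \le r$ by biject\-ing contributions to runs. A character $c$ belongs to $ADD_q$ iff $c \in out(\preord u_{i_q})$ but $c \notin out(\preord u_{i_q - 1})$; by the convention $out(\preord u_0) = \emptyset$, this is exactly the starting position of a maximal $c$-run in the XBWT. Since every $c$-run has a unique starting position, and the total number of runs across all $c \in \Sigma$ equals $r$, we get $\sum_q |ADD_q| = r$. Symmetrically, $c \in DEL_q$ means a $c$-run ends at position $i_q - 1$, i.e.\ the preceding $c$-run break occurs strictly before the last position; so $\sum_q |DEL_q| \le r$ (with equality up to the $|out(\preord u_n)|$ runs that end at position $n$ and are thus not witnessed by a later block boundary).

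Second, I would bound $r'$. By maximality of blocks, for every $q \ge 2$ we must have $out(\preord u_{i_q}) \ne out(\preord u_{i_q-1})$, and hence $ADD_q \cup DEL_q \ne \emptyset$, i.e.\ $|ADD_q| + |DEL_q| \ge 1$. Summing from $q = 2$ to $r'$ gives $r' - 1 \le \sum_{q \ge 2}(|ADD_q| + |DEL_q|) \le 2r$, so $r' = O(r)$.

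Finally, each triple $(ADD_q, DEL_q, \ell_q)$ is stored using $O(|ADD_q| + |DEL_q|)$ words for the two sets (each character fits in one machine word since $\sigma \le n$) plus $O(1)$ words for the integer $\ell_q \le n$. The total is therefore
$$
O(r') + O\!\left(\sum_q (|ADD_q| + |DEL_q|)\right) \;=\; O(r) + O(r) \;=\; O(r)
$$
words, as claimed. I don't foresee a genuine obstacle here: the whole proof is bookkeeping, with the only subtle point being the special handling of the boundary cases (block $1$, and runs that survive to position $n$), which is harmless because both contribute only $O(\sigma) = O(r)$ to the respective sums.
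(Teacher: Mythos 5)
Your proof is correct and follows essentially the same accounting strategy as the paper: bound $\sum_q |ADD_q|$ and $\sum_q |DEL_q|$ against $r$, then bound the number of blocks $r'$ using the fact that every block has $|ADD_q| + |DEL_q| \ge 1$. The only minor difference is that you obtain the tighter identity $\sum_q |ADD_q| = r$ by bijecting directly with $c$-run starting positions (a clean observation, since $out$ is constant within each block so runs begin and end only at block boundaries), whereas the paper charges first occurrences separately (contributing $\sigma \le r$) and the rest to previous run breaks, yielding $A \le 2r$; both give the same $O(r)$ conclusion.
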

\begin{proof}
 Let $A = \sum_{q=1}^{\nblocks} |ADD_q|$ and $D = \sum_{q=1}^{\nblocks} |DEL_q|$.
Note that the union of all sets $DEL_q$ contains the labels of all run breaks except the ones in the last position $n$: $D \leq r$ (see also Figure \ref{fig:example}).
The first occurrence of a character in $XBWT(\mathcal T)$ appears in the set $ADD_q$ of the corresponding block. These characters contribute $\sigma \leq r$ to the total size $A$ of these sets. Furthermore, each other element $c\in ADD_q$ is charged to the previous $c$-run break in the XBWT. It follows that $A \leq 2r$. Finally, note that $ADD_q \cup DEL_q \neq \emptyset$ must hold for every $q$, since otherwise the outgoing labels of the $q$-th block would coincide with those of the $(q-1)$-th block. Since $A+D \leq 3r$, this implies that there are also at most $\nblocks \leq 3r$ blocks. Our thesis follows. \qed
\end{proof}

\subsection{Relation with the Trie's Entropy}\label{sec:Hk}

A \emph{tree entropy} measure quantifies the amount of information in a labeled tree, either capturing the amount of predictability of its labels, its topology, or both.
Several notions of empirical entropy for trees have been considered in the literature so far. Ferragina et al.~\cite{XBWT} define the high-order empirical entropy of the tree's labels. This notion, however, does not take into account the tree's topology and is defined for arbitrary labeled trees. Jansson et al.~\cite{JANSSON2012619}, Hucke et al.~\cite{Hucke19}, and Ganczorz \cite{ganczorz2020using}  define tree entropy measures taking into account also the topology. Also their notions, however, work for arbitrary trees. 

The \emph{worst-case entropy $\mathcal C(n,\sigma)$ of a trie} considered by Raman et al.~\cite{RRR} is the measure we consider as starting point in this section. This quantity is defined as $\mathcal C(n,\sigma) = \log_2(|\mathcal U_{n,\sigma}|)$, where $\mathcal U_{n,\sigma}$ is the universe containing all tries with $n$ nodes on an alphabet of cardinality $\sigma$. 
Note that this is a clear lower bound (in bits) for encoding the trie, given only knowledge about $n$ and $\sigma$.
Measure $\mathcal C(n,\sigma)$ is still too weak for our purposes; we now show how to model also character frequencies (that is, zero-order compression) and, ultimately, high-order compression. 
Intuitively, our goal is to compute the information-theoretic lower bound for encoding the trie's labels given that we know the probability of seeing the label of an edge, conditioned on the path of length $k$ preceding it (for all $\sigma^k$ combinations of possible paths). 
On strings, it is well known that this notion of entropy has a strong relation with the notion of \emph{empirical entropy}~\cite{Kosaraju99}. For example, on binary alphabet the two measures differ at most by an additive $O(\log n)$ term~\cite{navarro2016compact}.

As in previous studies~\cite{XBWT,Hucke19}, we work in a model where the string $\pi_k[\preord u]$ of the last $k$ labels seen on the path connecting the root to a node $\preord u$ is a good predictor for the set $out(\preord u)$. 
More formally, $\pi_0[\preord u] = \epsilon$ (empty string), $\pi_1[\preord u] = \lambda(\preord u)$ and $\pi_k[\preord u] = \pi_{k-1}[\pi(\preord u)] \cdot \lambda(\preord u)$ for $k>1$. For this to be well-defined, we also set $\pi(1) = 1$ (1 is the root) to pad with $\lambda(1) = \#$ the contexts of nodes at depth less than $k$.
Let $X = X_1, \dots, X_{n'}$ be a sequence of ${n'}$ subsets of $\Sigma$ such there are $n'_c = \sum_{i=1}^{n'} |X_i\cap \{c\}|$ occurrences of character $c$ in the sequence, for all $c\in\Sigma$. The \emph{zero-order worst-case} entropy $\mathcal H^{wc}(X)$ of $X$ is defined as the logarithm of the size of the universe containing all set sequences of length $n'$ having the same characters' frequencies as $X$ (see Navarro~\cite{navarro2016compact}):
$
\mathcal H^{wc}(X) = \log_2 \left( \prod_{c\in\Sigma} {{n'}\choose{n'_c}} \right) = \sum_{c\in\Sigma} \log_2 {{n'}\choose{n_c'}}
$.
In the following we will simply write $\mathcal H^{wc}$ when the characters' frequencies are clear from the context. Note that $\mathcal H^{wc}(XBWT) \leq \mathcal C(n,\sigma)$, since the former fixes the frequencies of each character while the latter allows any frequency combination summing up to $n-1$.

At this point, we adapt the approach of Ferragina et al.~\cite{XBWT}. We define the sequence of sets
$cover(\rho) = \langle out(\preord u_{i}) \rangle_{i\ :\ \pi_k[\preord u_{i}] = \rho}$. Intuitively, $cover(\rho)$ is the sequence of sets containing all characters labeling edges that follow a path labeled with string $\rho$. The order by which the sets of $cover(\rho)$ are arranged is not important, as we apply zero-order compression to their elements. We define:

\begin{definition}
\label{def:Hk}
$
\mathcal H^{wc}_k(\mathcal T) = \sum_{\rho \in \Sigma^{k}} \mathcal H^{wc}(cover(\rho)) 
$
\end{definition}
In the following we will simply write $\mathcal H^{wc}_k$ when $\mathcal T$ is clear from the context. Clearly, $\mathcal H^{wc}_k \leq \mathcal H^{wc}$ since $\mathcal H^{wc}_k$ fixes the characters' frequencies for each context $\rho$. 

The next step is to relate $r$ with $\mathcal H^{wc}_k$. On strings, it is well known that $r$ lower-bounds the $k$-th order empirical entropy~\cite{RLFM}. We show that this is the case also for the worst-case entropy of tries.

\begin{theorem}\label{th:r<nHk}
    The number $r$ of XBWT runs is always at most $\mathcal H^{wc}_k + \sigma^{k+1}$ for any $k\geq 0$.
\end{theorem}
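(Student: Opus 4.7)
The plan is to mirror the classical Mäkinen-Navarro argument for strings: partition the XBWT into contiguous blocks corresponding to the $k$-th order contexts $\rho \in \Sigma^k$, bound the local $c$-run count inside each block by the zero-order worst-case entropy of $cover(\rho)$, and absorb the cross-block boundary losses into the additive $\sigma^{k+1}$ term. The first step is to verify that nodes sharing the same length-$k$ context $\pi_k[\preord u]=\rho$ occupy consecutive positions in the co-lex order. Since the co-lex comparison between $\lambda(1 \rightsquigarrow \preord u)$ and $\lambda(1 \rightsquigarrow \preord v)$ reads both paths right-to-left, the first $k$ characters compared are exactly the reverses of $\pi_k[\preord u]$ and $\pi_k[\preord v]$; two nodes with equal $k$-context therefore share a co-lex prefix of length $k$ and no node with a different $k$-context can appear between them. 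The XBWT thus decomposes into at most $\sigma^k$ contiguous blocks $\{B_\rho\}$ (the $\#$-padding convention keeps padded contexts at the front since $\#$ is lex-smallest), and the sequence of outgoing-label sets inside $B_\rho$ is precisely $cover(\rho)$.

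Next, I would count $c$-run breaks locally. Each global $c$-run in the XBWT is either entirely inside some $B_\rho$ or is split by block boundaries into several local $c$-runs, so $r_c \leq \sum_\rho r_{c,\rho}$, where $r_{c,\rho}$ is the number of $c$-runs inside $B_\rho$. Let $n'_\rho = |cover(\rho)|$ and let $n'_{\rho,c}$ be the number of positions in $B_\rho$ whose outgoing set contains $c$. Each local $c$-run consumes at least one $c$-position and, except possibly the last one, is followed by a non-$c$-position, whence $r_{c,\rho} \leq \min(n'_{\rho,c},\, n'_\rho - n'_{\rho,c}) + 1$. The elementary inequality $\log_2 {N \choose a} \geq \min(a, N-a)$, which follows from ${N \choose a} \geq (N/a)^a \geq 2^a$ whenever $a \leq N/2$, then yields $r_{c,\rho} \leq \log_2 {n'_\rho \choose n'_{\rho,c}} + 1$.

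Finally, summing over $c \in \Sigma$ inside each block gives $\sum_c r_{c,\rho} \leq \mathcal{H}^{wc}(cover(\rho)) + \sigma$; summing over the at most $\sigma^k$ contexts $\rho$ and invoking Definition~\ref{def:Hk} yields $r = \sum_c r_c \leq \mathcal{H}^{wc}_k + \sigma \cdot \sigma^k = \mathcal{H}^{wc}_k + \sigma^{k+1}$. The only structurally non-routine step is the contiguity of context classes in the co-lex order; once it is in place, the argument is a direct lift of the string case, with the set-valued XBWT positions handled by summing each character's contribution independently.
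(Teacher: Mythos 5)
Your proof is correct and follows essentially the same route as the paper: partition the XBWT by length-$k$ contexts using their contiguity in co-lex order, bound the local per-character run-break count by $\min(n'_{\rho,c},\,n'_\rho-n'_{\rho,c})+1 \leq \log_2\binom{n'_\rho}{n'_{\rho,c}}+1$, and sum over characters and over the at most $\sigma^k$ contexts. Your derivation of $\min(a,N-a)\leq\log_2\binom{N}{a}$ via $\binom{N}{a}\geq (N/a)^a\geq 2^a$ for $a\leq N/2$ (plus symmetry) is slightly more direct than the paper's, which detours through $\min\{n'_c,n'-n'_c\}\leq n'_c\log_2(n'/n'_c)$ proved by elementary calculus before invoking the same binomial bound, but the two arguments are doing the same work.
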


In Appendices \ref{sec:WA} and \ref{sec:tree attr} we relate $r$ with other repetitiveness measures on tries: tree attractors \cite{attractors,prezzaOnStringAttractors} and the size of the smallest equivalent Wheeler automaton \cite{SODA20,GAGIE201767}.


\section{Locating Paths in Compressed Tries}\label{sec:locate}

The \emph{locate problem} can be naturally generalized from strings to labeled trees as follows: given a pattern $P$, return the pre-order identifier $\preord u$ of all nodes such that $\lambda(1\rightsquigarrow \preord u)$ is suffixed by $P$. In such a case, we will say that $\preord u$ is \emph{reached by a path labeled $P$}. 
Plugging up-to-date data structures~\cite{optimal-rank-select} in the XBWT of Ferragina et al.~\cite{XBWT}, this structure takes $2n+o(n)$ bits on top of the entropy-compressed labels and \emph{counts} nodes reached by a path labeled with a pattern $P\in\Sigma^m$ in $O\left(m\log\log_w \sigma\right)$ time. 
We observe that it is straightforward to support also \emph{locate} queries on the XBWT by extending the standard solution (based on \emph{sampling}) used in compressed suffix arrays:

\begin{lemma}\label{lem:locate XBWT}
For any $1\leq t \leq n$, the XBWT can be augmented with additional $O((n/t)\log n) + o(n)$ bits so that, after counting, the pre-order identifiers of all $occ$ nodes reached by a path labeled with a pattern $P\in\Sigma^m$ can be returned in $O(occ\cdot t\log\log_w \sigma)$ time.
\end{lemma}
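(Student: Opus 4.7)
The plan is to adapt the classical sampling mechanism of the FM-index and compressed suffix arrays~\cite{CSA} to the XBWT. I would sample $\Theta(n/t)$ nodes and, at query time, walk the XBWT by an operation that advances the pre-order identifier by exactly one per step until hitting a sample; the pre-order identifier of the starting node is then recovered by subtracting the number of steps taken from the stored pre-order of the reached sample.

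First, I would set up the sampling. Let $S$ be the set of nodes whose pre-order identifier is a multiple of $t$, so $|S| \leq n/t$. Mark the co-lex positions of the nodes in $S$ using a bit vector $B[1..n]$ stored in Elias--Fano form: this uses $O((n/t)\log t)$ bits and supports $O(1)$-time access, rank and select. For each 1-bit of $B$, store the pre-order identifier of the corresponding sampled node in a packed array indexed by rank over $B$; this costs $O((n/t)\log n)$ bits. Overall, the additional space is $O((n/t)\log n) + o(n)$ bits, as claimed.

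Next, I would describe the query. After counting $P \in \Sigma^m$, we have the co-lex range of the $occ$ nodes reached by the pattern. For each co-lex position $i$ in this range, corresponding to a node $\preord u$, I would iterate a \emph{next-in-pre-order} operation $\Psi$ on the XBWT: if $out(\preord u) \neq \emptyset$, move to the co-lex smallest child of $\preord u$; otherwise, follow LF upward to the nearest ancestor possessing an unexplored sibling and step to that sibling. Each call to $\Psi$ advances the pre-order identifier by exactly one. Since $S$ contains a sample in any window of $t$ consecutive pre-order identifiers, at most $t$ invocations suffice to land on a co-lex position marked in $B$; if this happens after $d \leq t$ steps at a sample with stored pre-order $k$, output $preorder(\preord u) = k - d$. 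Membership in $B$ is tested in $O(1)$ per step.

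Finally, I would analyse the running time. Each elementary XBWT navigation step ($child_c$ or LF) costs $O(\log\log_w \sigma)$ using the rank/select structures of the base XBWT~\cite{XBWT}. The main point, and the main obstacle, is that a single $\Psi$ call may climb many ancestors before finding a next sibling: I would bound the total work over $t$ consecutive $\Psi$ invocations using the classical amortized analysis of pre-order DFS, charging each upward LF traversal against the earlier downward $child_c$ traversal that created a subtree yet to be exited. This yields that every tree edge inside the $t$-step window is traversed at most twice, so a $t$-step $\Psi$-walk performs $O(t)$ XBWT navigation steps and costs $O(t\log\log_w \sigma)$ per occurrence, for a grand total of $O(occ \cdot t \log\log_w \sigma)$.
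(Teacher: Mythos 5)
Your sampling and space accounting are fine, but the time analysis has a genuine gap: the amortized-DFS charging argument does not apply to a walk that starts at an arbitrary occurrence node. The claim that ``every tree edge inside the $t$-step window is traversed at most twice'' only holds for upward moves that undo a downward move made \emph{within the same window}; the upward moves along edges that are ancestors of the starting node are not matched by any earlier descent, and there can be $\Theta(\mathrm{depth})$ of them in a \emph{single} call to $\Psi$. Concretely, take a trie whose root has two children, an $a$-child starting a unary $a$-path of length $D=\Theta(n)$ and a $b$-leaf; if the occurrence is the deepest node of the $a$-path (or a node just above it) and $t$ is small, the very first pre-order-successor step must climb $D$ levels before finding the unexplored $b$-sibling of the root. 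Hence a $t$-step $\Psi$-walk costs $O(t+\mathrm{depth})=O(n)$ XBWT operations in the worst case, not $O(t)$, and the stated $O(occ\cdot t\log\log_w\sigma)$ bound does not follow. This is exactly the pitfall the paper flags (a single XBWT operation, or even an amortized constant number of them, does not suffice to advance by one pre-order position when starting mid-tree), which is why a pre-order-modulo-$t$ sampling alone is not enough.

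The paper's proof avoids this by replacing the arithmetic sampling with the tree covering of Geary et al.~\cite{geary2006succinct}: the trie is decomposed into $\Theta(n/t)$ subtrees of $O(t)$ nodes each (overlapping only at roots), the pre-order identifiers of the subtree roots are sampled in XBWT order, and partial sums of the sizes of the complete subtrees hanging below those roots are stored. A query then climbs from the occurrence to the root of its $O(t)$-node cover subtree (at most $O(t)$ parent steps, since the climb cannot leave the component) and performs an Euler tour of that subtree, using the size partial sums to skip over the complete subtrees rooted at other cover roots while maintaining a running pre-order counter. Because all work is confined to an $O(t)$-node component by construction, the $O(t\log\log_w\sigma)$ per-occurrence bound holds unconditionally. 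If you want to salvage your forward-walk idea, you would need an additional mechanism that bounds the total climbing (for instance, a decomposition of exactly this kind, or extra samples placed so that no successor step can escape far above the window), which is the missing ingredient in your argument.
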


The simple solution of Lemma \ref{lem:locate XBWT} has 
the issue that
the trade-off $t$ allows obtaining either a fast but large index or a slow and small index. 

The goal of this section is to solve both the above issues. More in detail, we show that a structure of $O(r) \subseteq O(\mathcal H^{wc}_k)$ words on top of a succinct topology representation of $2n+o(n)$ bits is sufficient to locate path occurrences in optimal constant time each. We start with navigation operations  that will be needed in our index.

\begin{enumerate}
	\item \texttt{Child rank} $cr(\wheel u,c)$. Given the co-lex order $\wheel u=i$ of a node and a label $c\in out(\preord u_{i})$, return the integer $k$ such that the edge connecting $\wheel u$ with its $k$-th child is labeled with character $c$.\label{op:cr}
	\item \texttt{Depth} $depth(\preord u)$. Return the depth of pre-order node $\preord u$ (where the root has depth 0). \label{op:depth}
	\item \texttt{Child by rank} $cbr(\preord u,k)$. Return the $k$-th (pre-order) child of pre-order node $\preord u$.\label{op:cbr}
	\item \texttt{Sibling rank} $sr(\preord u)$. Return the integer $k$ such that $\preord u$ is the $k$-th child of its parent.\label{op:sr}
	\item \texttt{Lowest Common Ancestor} $LCA(\preord u, \preord v)$ of two pre-order nodes $\preord u$ and $\preord v$.\label{op:lca}
	\item \texttt{Level Ancestor Queries} $LAQ(\preord u, \ell)$. Given $\ell\geq 1$, return $\pi^{(\ell)}(\preord u)$, that is, the parent function $\pi$ applied $\ell$ times to pre-order node $\preord u$.\label{op:lac}
	\item \texttt{Isomorphic Descendant} $ISD(\preord u, \preord v, \preord u')$. Let $\preord v$ be a descendant of $\preord u$ reached by following a path $\preord u \rightarrow \preord w \rightsquigarrow \preord v$ with $\alpha = \lambda(\preord w \rightsquigarrow \preord v)$, and let $\preord u' \approx \preord u$ be a node isomorphic to $\preord u$. This operation returns the descendant $\preord v'$ of $\preord u'$ reached by following the path $\preord u' \rightarrow \preord w' \rightsquigarrow \preord v'$ with $\lambda(\preord w' \rightsquigarrow \preord v') = \alpha$.\label{op:isd}
	\item \texttt{Isomorphic Child} $ISC(\preord u_i, k)$. Given a pre-order node $\preord u_i$, $i<n$, such that $out(\preord u_i) \neq out(\preord u_{i+1})$ and given an integer $1\leq k \leq |out(\preord u_i)|$, let $c = \lambda(cbr(\preord u_i,k))$ be the $k$-th smallest label in $out(\preord u_i)$. 
	Assuming that $c\in out(u_{i+1})$, this function returns the integer $t$ such that
	$c = \lambda(cbr(\preord u_{i+1},t))$. \label{op:isc}
\end{enumerate}

\begin{lemma}\label{lem:cr}
There is a data structure taking $O(r\log n) + o(n)$ bits of space and supporting operation $cr(\wheel u,c)$ in $O(\log\sigma)$ time. 
\end{lemma}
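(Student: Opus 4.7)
The plan is to reduce $cr(\wheel u, c)$ with $\wheel u = i$ to two prefix-rank queries on wavelet trees, preceded by a fast block-identification step. First, to locate the RL-XBWT block $q$ containing position $i$, I would store a sparse bitvector $B[1..n]$ with a $1$ at each block-start position; since $r' = O(r)$ by Lemma \ref{lem:RL-BWT}, a succinct (e.g., RRR) representation of $B$ uses $O(r\log(n/r)) + o(n) \subseteq O(r\log n) + o(n)$ bits and computes $q = \mathrm{rank}_1(B, i)$ in constant time.

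For the rank of $c$ inside $out_q$, the key observation is that membership in $out_q$ can be read off a signed net of insertions and deletions in block order. For every character $c'$, the events ``$c'$ enters $out$'' (at the start of a $c'$-run) and ``$c'$ leaves $out$'' (at its end) strictly alternate in block order, so after blocks $1, \dots, q$ the net count of such events for $c'$ equals $1$ if $c' \in out_q$ and $0$ otherwise. Summing over $c' \le c$ yields
\[
cr(\wheel u, c) \;=\; |\{c' \le c : c' \in out_q\}| \;=\; P_+(q, c) - P_-(q, c),
\]
where $P_+(q, c) := \sum_{q' \le q} |\{c' \in ADD_{q'} : c' \le c\}|$ and $P_-(q, c)$ is defined analogously with $DEL$ in place of $ADD$. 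I would concatenate the sets $ADD_q$ in block order (using any fixed within-block order) to form a string $H_+ \in \Sigma^*$ of length $\sum_q |ADD_q| = O(r)$ by the proof of Lemma \ref{lem:RL-BWT}, and analogously build $H_-$ from the $DEL_q$; then $P_+(q, c)$ (resp.\ $P_-(q, c)$) is a standard prefix-rank query on $H_+$ (resp.\ $H_-$). A wavelet tree on each string takes $O(r \log \sigma) \subseteq O(r \log n)$ bits and answers such a query in $O(\log \sigma)$ time. To map the block index $q$ to the correct prefix endpoint inside $H_+$ (resp.\ $H_-$), I would add an $O(r)$-bit bitvector marking block boundaries inside $H_+$ (resp.\ $H_-$), with $\mathrm{select}_1$ supported in constant time.

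Putting the ingredients together yields $O(r\log n) + o(n)$ bits and $O(1) + 2 \cdot O(\log\sigma) = O(\log\sigma)$ time, matching the lemma. The hard part will be avoiding the natural but more expensive route via a persistent balanced BST on $\Sigma$, where the subtree-size counter updates required for rank queries would touch $\Theta(\log\sigma)$ nodes per insertion/deletion and inflate the total space to $\Theta(r\log\sigma\log n)$ bits. The signed-events reformulation above circumvents persistence entirely by reducing the dynamic rank problem on the $out_q$'s to two static prefix-rank instances on sequences of total length $O(r)$.
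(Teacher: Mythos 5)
Your proposal is correct and follows essentially the same route as the paper: identify the RL-XBWT block of $\wheel u$ with a compressed bitvector over block starts, then compute the child rank as the telescoping difference between the number of $ADD$ and $DEL$ events for characters $\le c$ up to that block, answered by wavelet-tree range/prefix-rank counting in $O(\log\sigma)$ time within $O(r\log n)+o(n)$ bits. The only (immaterial) difference is the serialization: the paper interleaves both event types in one sequence $S'$ over a doubled alphabet $\{c^-,c^+\}\cup\{/\}$ and uses three-sided range counting at the position of the $i$-th separator, whereas you keep two sequences $H_+$ and $H_-$ over $\Sigma$ with auxiliary boundary bitvectors and issue two prefix-rank queries.
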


For the remaining operations, we store explicitly the topology. 
Navarro and Sadakane~\cite{suctrees} show how to support operations \ref{op:depth}-\ref{op:lac} in $O(1)$ time using $2n + o(n)$ bits of space. 
We show:

\begin{lemma}\label{lemma:isd}
    The structure of Navarro and Sadakane~\cite{suctrees} supports also $ISD(\preord u, \preord v,\preord u')$ in $O(1)$ time. 
\end{lemma}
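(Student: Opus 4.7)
The plan is to show that $ISD$ reduces to a single arithmetic operation on pre-order identifiers, so that the Navarro--Sadakane topology representation (or in fact any representation giving pre-order identifiers explicitly) trivially supports it in constant time.

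The starting observation is that the paper stipulates that the children of every node are implicitly sorted by their incoming labels, so the pre-order traversal of any subtree is uniquely determined up to isomorphism. Concretely, if $\preord u \approx \preord u'$ (meaning the complete subtrees rooted at $\preord u$ and $\preord u'$ are isomorphic when we view children as ordered by label), then a pre-order traversal of the subtree at $\preord u$ and of the subtree at $\preord u'$ produce sequences of nodes whose $i$-th elements correspond under the isomorphism. Since pre-order identifiers are assigned globally in pre-order, the descendants of $\preord u$ occupy the contiguous range of identifiers $[\preord u, \preord u + s - 1]$, where $s$ is the size of the subtree; and similarly for $\preord u'$.

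From this I would conclude that the descendant $\preord v$ of $\preord u$ has pre-order rank $\preord v - \preord u + 1$ within the subtree of $\preord u$, and its isomorphic image $\preord v'$ in the subtree of $\preord u'$ has the same rank $\preord v' - \preord u' + 1$, yielding the identity
\[
\preord v' \;=\; \preord u' + (\preord v - \preord u).
\]
Since $\preord v'$ lies on the path starting with the edge to $\preord w' = cbr(\preord u', sr(\preord w))$ and continuing along the unique path labeled $\alpha$, the isomorphism guarantees that this formula returns exactly the node described in the operation's specification.

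The computation is a single subtraction and addition on $O(\log n)$-bit words, hence $O(1)$ in the word RAM model, and uses only the pre-order identifiers already maintained by the Navarro--Sadakane structure. No additional space or auxiliary data structure is needed, so the claim follows. There is essentially no obstacle here beyond spelling out that the sorted-children convention together with completeness of the two isomorphic subtrees forces pre-order offsets to coincide; the only subtlety worth being explicit about is that $\approx$ is defined on \emph{complete} subtrees, which is exactly what ensures both descendant ranges have the same length and that the pre-order positions align.
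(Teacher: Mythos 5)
Your proposal is correct and uses essentially the same argument as the paper: both exploit that the complete subtrees rooted at $\preord u$ and $\preord u'$ are identical ordered labeled trees (children sorted by incoming label), so corresponding nodes sit at equal offsets in a canonical linearization, giving the answer by one addition/subtraction. The only (harmless) difference is that you do the offset arithmetic directly on pre-order identifiers, so the operation needs no data structure at all, whereas the paper performs the same translation on balanced-parentheses positions and then uses the constant-time conversion between BPS positions and pre-order ranks.
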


\begin{lemma}\label{lemma:isc}
    Operation $ISC(\preord u_i, k)$ can be supported in $O(1)$ time and $O(r\log n) + o(n)$ bits of space.
\end{lemma}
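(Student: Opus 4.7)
The operation $ISC$ is only well-defined at RL-XBWT block boundaries, and by Lemma~\ref{lem:RL-BWT} there are at most $r' = O(r)$ such boundaries. Hence we have a budget of roughly $\Theta(\log n)$ bits of bookkeeping per boundary to meet the space bound. My first step is to derive an arithmetic formula that relates $t$ to $k$ in terms of predecessor counts inside $ADD$ and $DEL$. Letting $q$ be the block whose last node is $\preord u_i$, writing $S_q = out(\preord u_{i_q}) = out(\preord u_i)$ and $c$ for the $k$-th smallest element of $S_q$, and using that $DEL_{q+1} \subseteq S_q$ and $ADD_{q+1} \cap S_q = \emptyset$, a direct counting argument on $S_{q+1} = (S_q \setminus DEL_{q+1}) \cup ADD_{q+1}$ yields
\[
t \;=\; k \;-\; |\{d \in DEL_{q+1} : d < c\}| \;+\; |\{a \in ADD_{q+1} : a < c\}|.
\]

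The crucial observation is that the actual character $c$ never has to be computed. Since $DEL_{q+1} \subseteq S_q$, ``$d < c$'' is equivalent to ``$\rho_{S_q}(d) < k$'', where $\rho_{S_q}(d)$ is the rank of $d$ in the sorted set $S_q$. Similarly, for each $a \in ADD_{q+1}$ we may use the virtual rank $\rho^*_{S_q}(a) \in \{0,\dots,|S_q|\}$ of $a$ in $S_q$ (its insertion position), so that ``$a < c$'' is equivalent to ``$\rho^*_{S_q}(a) \le k-1$''. Thus each boundary can be described by two monotone integer sequences
\[
R^{del}_q = \langle \rho_{S_q}(d) : d \in DEL_{q+1}\rangle,\qquad R^{add}_q = \langle \rho^*_{S_q}(a) : a \in ADD_{q+1}\rangle,
\]
each value fitting in $O(\log n)$ bits, and an $ISC(\preord u_i,k)$ query reduces to the two ``count-below-$k$'' queries on those sequences.

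The second step is the storage analysis. Since $\sum_{q}(|ADD_q|+|DEL_q|)=O(r)$ by Lemma~\ref{lem:RL-BWT}, the total number of integers stored across all boundaries is $O(r)$, occupying $O(r\log n)$ bits as required. After the two counts are produced, $t$ is recovered with $O(1)$ arithmetic operations, so the whole problem is reduced to supporting those monotone-list rank queries in constant time within the same space budget.

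The final and hardest step is realising the rank queries in $O(1)$ time, regardless of how the global $O(r)$ budget is distributed across boundaries. Boundaries whose lists pack into $O(1)$ machine words are handled by word-level parallelism augmented with a universal lookup table of $o(n)$ bits (in the same spirit as Lemma~\ref{lem:cr}). The main obstacle is the possibility that some boundary contributes a list much longer than $w$; for those lists I would resort to a succinct constant-time rank/predecessor structure on small universes (e.g.\ a packed-$B$-tree built atop each long list, whose total overhead remains $O(r\log n)+o(n)$ bits since the lengths sum to $O(r)$). Combining the two cases yields the desired $O(1)$-time $ISC$ within $O(r\log n)+o(n)$ bits.
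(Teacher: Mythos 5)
Your reduction is correct: the identity $t = k - |\{d\in DEL_{q+1} : d < c\}| + |\{a\in ADD_{q+1} : a < c\}|$ holds (using $c\notin DEL_{q+1}$, which is guaranteed by the query's precondition), and the observation that the comparisons with $c$ can be replaced by comparisons of ranks inside $out(\preord u_i)$, so that $c$ itself is never computed, is exactly the right idea. Indeed it is an arithmetic rephrasing of the paper's construction, which stores for each run break two indicator bitvectors: $S_i^1$ marking which outgoing labels of $\preord u_i$ also occur in $out(\preord u_{i+1})$, and $S_i^2$ marking which outgoing labels of $\preord u_{i+1}$ come from $out(\preord u_i)$, and answers $ISC$ with one rank on $S_i^1$ and one select on $S_i^2$. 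Your space accounting for the stored values ($O(r)$ integers, hence $O(r\log n)$ bits) is also fine.

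The gap is in your last step, which is where the entire $O(1)$ claim lives. For a boundary whose list has length $\ell$ much larger than $w$, a packed B-tree (or any fusion-tree-style structure) answers a count-below-$k$ query in $\Theta(\log_{w/\log\sigma}\ell)$ time, which is not constant: for instance with $\sigma=n^{1/2}$ and $\ell=\Theta(\sigma)$ the depth is $\Theta(\log n)$. Constant-time predecessor on an arbitrary stored set is not something you can invoke as a black box in this space regime, so as written the long-list case fails. What rescues the argument is precisely the feature you already set up but did not exploit: the queries live on the \emph{rank} universe $[0,|out(\preord u_i)|]$, and the sizes of these universes, summed over all block boundaries, are $O(n)$ (one out-set per block, and out-degrees over all nodes sum to $n-1$). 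So replace each monotone integer list by its characteristic bitvector over that universe, concatenate all of them into one sequence of total length $O(n)$ containing only $O(r)$ marked positions, and store it with the entropy-compressed representation of Raman et al.; each count-below-$k$ then becomes a single constant-time rank query, within $o(n)+O(r\log n)$ bits. This is essentially what the paper's $S_i^1,S_i^2$ achieve. You should also say how, given the pre-order identifier $\preord u_i$, you reach the offsets of its two lists in constant time (a further compressed bitvector marking the run-break nodes in pre-order, plus one marking the list boundaries, suffices); your write-up glosses over this addressing step.
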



Our strategy for supporting efficient \emph{locate} queries on the XBWT is a nontrivial generalization to tries of the \emph{r-index} data structure~\cite{r-index} (a locate machinery on strings).
Let $[\wheel \ell,\wheel r]$ be the co-lexicographic range of nodes reached by a path labeled $P$. 
We divide the problem of answering locate queries into two sub-problems. (1) \textbf{Toehold}: compute $[\wheel \ell, \wheel r]$ and $\preord u_{\wheel \ell}$. (2) \textbf{Climb}: evaluate function $\phi(\preord u_i) = \preord u_{i+1}$ for any $i<n$. The combination of (1) and (2) yields $\preord u_{\wheel \ell}, \dots, \preord u_{\wheel r}$.

The \emph{Toehold} step requires navigating the tree using both node representations. We show:

\begin{lemma}\label{lem:toehold}
There is a data structure taking $O(r\log n) + o(n)$ bits of space on top of the succinct tree topology of Navarro and Sadakane~\cite{suctrees} that, 
given a pattern $P\in \Sigma^m$, returns the co-lexicographic range $[\wheel \ell,\wheel r]$ of nodes reached by a path labeled $P$, as well as $\preord u_{\wheel\ell}$, in $O\left(m\log\sigma\right)$ time.
\end{lemma}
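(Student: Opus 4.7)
The plan is to perform a left-to-right XBWT forward search for $P$ while additionally carrying, as an invariant, the pre-order identifier of the co-lex smallest node of the current range. Let $R_i = [\wheel\ell_i, \wheel r_i]$ be the co-lex range of nodes whose root-to-node path ends with $P[1..i]$. Starting from $R_0 = [1,n]$ with $\preord u_{\wheel\ell_0} = 1$ (the root), the endpoints of $R_{i+1}$ are obtained from those of $R_i$ by the standard XBWT LF-mapping, implemented via partial rank queries on the outgoing-label sequence. The RL-XBWT supports these in $O(\log\sigma)$ time and $O(r\log n)+o(n)$ bits by a direct adaptation of the r-index rank machinery on strings: each character's run structure can be indexed independently, so the $O(r)$-space rank primitives transfer almost verbatim.

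The new ingredient is updating $\preord u_{\wheel\ell_{i+1}}$. Let $c = P[i+1]$. All nodes of $R_{i+1}$ share the incoming label $c$, so co-lex axiom (ii) implies that their order coincides with the co-lex order of their parents in $R_i$; hence the smallest node of $R_{i+1}$ is $child_c(\preord u_{\wheel j})$, where $\wheel j$ is the smallest position in $[\wheel\ell_i, \wheel r_i]$ with $c \in out(\preord u_{\wheel j})$. If $c \in out(\preord u_{\wheel\ell_i})$ then $\wheel j = \wheel\ell_i$ and we set $\preord u_{\wheel\ell_{i+1}} = cbr(\preord u_{\wheel\ell_i}, cr(\wheel\ell_i, c))$ in $O(\log\sigma)$ time via Lemma \ref{lem:cr} and the constant-time $cbr$ on the succinct topology. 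Otherwise $\wheel j > \wheel\ell_i$ is forced to be the start of some $c$-run, since $c \notin out(\preord u_{\wheel j - 1})$ and $c \in out(\preord u_{\wheel j})$. We handle this case by explicitly sampling the pre-order identifier at the start of every $c$-run, for every $c \in \Sigma$, together with a per-character predecessor/successor structure over the run starts. There are $O(r)$ run starts in total, so the samples use $O(r\log n)$ bits, the predecessor structures fit in $O(r\log n) + o(n)$ bits and answer queries in $O(\log\sigma)$ time, and the retrieved $\preord u_{\wheel j}$ feeds into the same $cbr \circ cr$ update. Iterating over the $m$ characters of $P$ yields $[\wheel\ell,\wheel r]$ and $\preord u_{\wheel\ell}$ within the claimed time and space bounds on top of the Navarro--Sadakane topology.

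The main obstacle I expect is the careful bookkeeping needed to show that the RL-XBWT rank primitives, the per-character predecessor structures on run starts, and the $O(r)$ pre-order samples all fit \emph{jointly} within $O(r\log n)+o(n)$ bits while supporting every operation in $O(\log\sigma)$ time. Although conceptually a character-by-character reuse of r-index tools, this requires a careful choice of data structures (e.g., per-character run-length compressed bitmaps, or a single sparse bitmap over run starts plus alphabet-indexed offsets). The conceptual novelty, by contrast, is short: it is precisely the identification of ``leftmost $c$-run start after $\wheel\ell_i$'' as the anchor that carries the toehold through one step of the search, which is exactly what axiom (ii) licenses.
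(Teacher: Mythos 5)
Your proposal is correct and follows essentially the same route as the paper: backward-search the pattern via rank on the run-length XBWT, and carry the toehold $\preord u_{\wheel\ell}$ by descending with $cbr\circ cr$ when $P[i+1]\in out(\preord u_{\wheel\ell_i})$, and otherwise jumping to an explicitly sampled pre-order identifier stored at the start of each $c$-run (of which there are $O(r)$). The only part you leave at sketch level, the $O(\log\sigma)$-time rank/successor primitives in $O(r\log n)+o(n)$ bits, is exactly what the paper realizes with a single wavelet-tree-indexed sequence of $O(r)$ symbols ($c^+$/$c^-$/block separators) plus sampled partial ranks, i.e.\ your ``single sparse bitmap over run starts plus alphabet-indexed offsets'' alternative rather than the per-character bitmaps, whose $o(n)$-per-character overhead would not fit the stated space bound.
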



We remark that a simple sampling of the nodes' pre-order identifiers is not sufficient to replace the succinct tree topology in Lemma \ref{lem:toehold}. The problem with this strategy is that, in absence of the explicit topology, it is not possible to navigate from sampled nodes to non-sampled ones, thus retrieving the pre-order identifier of the latter. Similarly, the succinct topology is fundamental to support operations 2-8: without the topology it seems challenging to navigate between the pre-order identifiers of the nodes. We leave it as an exciting open question whether it is possible to represent the topology in $O(r)$ words while supporting all operations in poly-logarithmic time. 

We now show how to implement the \emph{Climb} step with a \emph{constant} number of jumps (from $\preord u_{i}$ to $\preord u_{i+1}$) on the tree, each taking constant time.
We mark nodes in blue, red, or both (colors are not exclusive). 
A node $\preord u_{i}$, $i<n$, is \emph{red} if it does not have the same \emph{outgoing} labels as its co-lexicographic successor: $out(\preord u_{i}) \neq out(\preord u_{i+1})$. A node $\preord u_{i}$, $i<n$, is \emph{blue} if it does not have the same \emph{incoming} label as its co-lexicographic successor:  $\lambda(\preord u_{i}) \neq \lambda(\preord u_{i+1})$.
Since $r'\leq 3r$ (Lemma \ref{lem:RL-BWT}) and $\sigma \leq r$, there are $O(r)$ marked nodes in total.
Our running example in Figures \ref{fig:example1} and \ref{fig:example} shows how nodes are colored according to the above definitions. 

The following lemma shows that co-lexicographic adjacency is preserved when following equally-labeled edge pairs and is important for our construction. 

\begin{lemma}\label{lem:prec}
If $\preord u <_{pred} \preord v$ then $child_c(\preord u) <_{pred} child_c(\preord v)$ for all $c\in out(\preord u) \cap out(\preord v)$. 
\end{lemma}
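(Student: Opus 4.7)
The plan is to leverage the two co-lexicographic axioms directly. First I would establish the weak inequality $child_c(\preord u) < child_c(\preord v)$, and then argue there can be no node strictly between them.

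For the first part, both $child_c(\preord u)$ and $child_c(\preord v)$ carry incoming label $c$, so axiom (i) is silent, and axiom (ii) applies: since $\pi(child_c(\preord u)) = \preord u < \preord v = \pi(child_c(\preord v))$ and the incoming labels coincide, we immediately get $child_c(\preord u) < child_c(\preord v)$.

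For the predecessor part, I would assume towards a contradiction that some node $\preord w$ satisfies $child_c(\preord u) < \preord w < child_c(\preord v)$. A case split on $\lambda(\preord w)$ rules this out. If $\lambda(\preord w) \prec c$, axiom (i) forces $\preord w < child_c(\preord u)$, a contradiction; symmetrically if $c \prec \lambda(\preord w)$. Therefore $\lambda(\preord w) = c$. Now use that $\mathcal{T}$ is a trie: two distinct nodes sharing incoming label $c$ must have distinct parents (otherwise $child_c$ would be ill-defined). This means the co-lex order among $c$-labeled nodes is governed purely by the parents via axiom (ii) together with totality of $<$: for $c$-labeled nodes $x,y$, one has $x < y$ iff $\pi(x) < \pi(y)$. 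Applying this to the chain $child_c(\preord u) < \preord w < child_c(\preord v)$ yields $\preord u < \pi(\preord w) < \preord v$, contradicting $\preord u <_{pred} \preord v$.

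The only subtle step is the converse direction of axiom (ii) used in the last move, but it follows formally: if we had $\pi(\preord w) \leq \preord u$ then axiom (ii) (or the trie distinctness in the case of equality) would give $\preord w \leq child_c(\preord u)$, and similarly for the other side; so $\preord u < \pi(\preord w) < \preord v$ is forced. I expect no real obstacle here; the main thing to be careful about is being explicit that the trie property excludes two $c$-children of the same parent, which is exactly what turns axiom (ii) into an if-and-only-if statement on nodes of equal label.
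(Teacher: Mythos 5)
Your proof is correct and follows essentially the same route as the paper's: assume a node $\preord w$ strictly between the two $c$-children, use Axiom (i) to force $\lambda(\preord w)=c$, and then use Axiom (ii) on the parents to contradict $\preord u <_{pred} \preord v$. You are in fact slightly more explicit than the paper about two small points it leaves implicit (that $child_c(\preord u) < child_c(\preord v)$ holds in the first place, and that $\pi(\preord w)$ cannot equal $\preord u$ or $\preord v$ because a trie has at most one $c$-child per node), but the underlying argument is the same.
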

\begin{proof}
Let $c\in out(\preord u) \cap out(\preord v)$, $\preord u' = child_c(\preord u)$, and $\preord v' = child_c(\preord v)$.
Suppose, by contradiction, that there exists $\preord w$ such that $\preord u' < \preord w < \preord v'$. By co-lexicographic Axiom (i) (see the beginning of Section \ref{sec:XBWT}), it must be the case that $c = \lambda(\preord u') = \lambda(\preord v') = \lambda(\preord w)$. Then, we have two cases. (a) $\pi(\preord w) < \preord u <_{pred} \preord v$, which by co-lexicographic Axiom (ii) implies $\preord w < \preord u'$, a contradiction. (b) $\preord u <_{pred} \preord v < \pi(\preord w)$, which by co-lexicographic Axiom (ii) implies $\preord v' < \preord w$, a contradiction. \qed
\end{proof}

Let $i<n$. By recursively applying Lemma \ref{lem:prec} to the descendants of a node, one can easily see the following:

\begin{corollary}\label{cor:no color -> isomorphic subtrees}
 $\preord u_i \not\approx \preord u_{i+1}$ if and only if the complete subtree rooted in $\preord u_i$ contains a red node.
\end{corollary}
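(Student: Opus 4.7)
The plan is to handle both implications through their contrapositives by a single induction on depth, using Lemma \ref{lem:prec} as the only tool: it propagates the co-lex adjacency $\preord u_i <_{pred} \preord u_{i+1}$ downward along equally-labeled paths, provided the two paired nodes at each level share the same outgoing label set.

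For the $(\Leftarrow)$ direction I assume the complete subtree rooted at $\preord u_i$ contains no red node and prove $\preord u_i \approx \preord u_{i+1}$. The inductive claim is: for every $\alpha \in \Sigma^*$ such that $\preord u_i$ has a descendant $\preord u_j$ reached by label string $\alpha$, the node $\preord u_{i+1}$ has a descendant $\preord u_{j'}$ reached by the same $\alpha$, and $\preord u_j <_{pred} \preord u_{j'}$ (in particular $\preord u_{j'} = \preord u_{j+1}$). The base $\alpha = \epsilon$ is the given $\preord u_i <_{pred} \preord u_{i+1}$. For the inductive step, since $\preord u_j$ lies in the subtree of $\preord u_i$ it is not red, hence $out(\preord u_j) = out(\preord u_{j+1}) = out(\preord u_{j'})$; invoking Lemma \ref{lem:prec} for every $c \in out(\preord u_j) \cap out(\preord u_{j'})$ extends the correspondence one level down with $child_c(\preord u_j) <_{pred} child_c(\preord u_{j'})$. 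The equality of out-sets also ensures the existence of children on both sides is equivalent, so the resulting label-preserving, depth-matched bijection between the two subtrees is exactly $\preord u_i \approx \preord u_{i+1}$.

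The $(\Rightarrow)$ direction is symmetric: assuming $\preord u_i \approx \preord u_{i+1}$, isomorphism immediately hands us matching out-sets for every pair of corresponding descendants, so the same induction using Lemma \ref{lem:prec} propagates $<_{pred}$ along equally-labeled paths starting from $\preord u_i <_{pred} \preord u_{i+1}$. Consequently every descendant $\preord u_j$ of $\preord u_i$ has its co-lex successor $\preord u_{j+1}$ equal to its isomorphic counterpart in the subtree of $\preord u_{i+1}$, from which $out(\preord u_j) = out(\preord u_{j+1})$ and $\preord u_j$ is not red.

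The main obstacle is purely bookkeeping: ensuring Lemma \ref{lem:prec}'s hypothesis $c \in out(\preord u) \cap out(\preord v)$ is satisfied at each inductive step. In both directions this holds for free, since either the non-redness assumption or the isomorphism assumption forces equality of the relevant out-sets.
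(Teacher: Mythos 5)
Your proof is correct and follows essentially the same route as the paper's: both directions are obtained by iterating Lemma \ref{lem:prec} down the subtree, with equality of out-sets (supplied either by non-redness or by the isomorphism hypothesis) licensing each application, exactly as in the paper. The only cosmetic differences are that you prove the ``red node in the subtree implies $\preord u_i \not\approx \preord u_{i+1}$'' direction via its contrapositive while the paper argues it directly, and that your $(\Rightarrow)$/$(\Leftarrow)$ labels are swapped relative to the statement.
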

\begin{proof}
 Assume that the complete subtree rooted in $\preord u_i$ does not contain any red node.
Since $\preord u_i$ is not red and $i<n$, then (by definition of red node) $out(\preord u_i) = out(\preord u_{i+1})$. But then, by Lemma \ref{lem:prec} $child_c(\preord u_i) <_{pred} child_c(\preord u_{i+1})$ for all $c \in out(\preord u_i) = out(\preord u_{i+1})$. The reasoning can be repeated inductively to the children of $\preord u_i$ until reaching the leaves, since the complete subtree rooted in $\preord u_i$ does not contain any red node. As a consequence, we obtain $\preord u_i \approx \preord u_{i+1}$.

Conversely, assume that the complete subtree rooted in $\preord u_i$ contains a red node. 
If $\preord u_i$ is red, then (by definition of red node) $out(\preord u_i) \neq out(\preord u_{i+1})$ and therefore $\preord u_i \not\approx \preord u_{i+1}$.
Otherwise, $\preord u_i$ is not red and we can repeat the reasoning to the children of $\preord u_i$ and $\preord u_{i+1}$ (as seen above). Since the complete subtree rooted in $\preord u_i$ contains a red node, at some point we will find a red descendant $\preord u_j$ of $\preord u_i$ such that $out(\preord u_j) \neq out(\preord u_{j+1})$, where $\preord u_{j+1}$ is the corresponding descendant of $\preord u_{i+1}$. As a consequence, $\preord u_i \not\approx \preord u_{i+1}$. \qed
\end{proof}

The following lemma shows that we can find colored descendants and ancestors in $O(1)$ time:

\begin{lemma}\label{lem:find colored}
 There is a data structure taking $O(r\log n) + o(n)$ bits of space on top of the succinct tree topology of Navarro and Sadakane~\cite{suctrees} and answering the following queries in $O(1)$ time. Given a pre-order node $\preord u_i$ with $i<n$:
 \begin{itemize}
     \item[(a)] 
     If $\preord u_i$ is not colored, find a colored node $\preord u_j \neq \preord u_i$ in the complete subtree rooted in $\preord u_i$ such that no node on the path from $\preord u_i$ to $\preord u_j$ is colored (except $\preord u_j$), or report that $\preord u_j$ does not exist.
     \item[(b)] Find the lowest ancestor $\preord u_j$ of $\preord u_i$ such that the complete subtree rooted in $\preord u_j$ contains a colored node. Note that such a node always exists, since the root is always blue. 
 \end{itemize}
\end{lemma}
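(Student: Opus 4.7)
The plan is to maintain, on top of the Navarro--Sadakane topology, a single bitvector $B[1..n]$ with $B[k]=1$ iff the node whose pre-order identifier is $k$ is colored. Since only $O(r)$ bits are set, a Raman--Raman--Rao representation stores $B$ in $nH_0(B) + o(n) = O(r\log(n/r)) + o(n) \subseteq O(r\log n) + o(n)$ bits with constant-time $rank_1$ and $select_1$; composing the two yields constant-time predecessor and successor queries among the colored pre-order identifiers. Combined with the topology's constant-time subtree-size, $depth$, and $LCA$ primitives, this is all the machinery needed.

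For (a), the plan is to return the colored node whose pre-order identifier is the smallest value in the interval $(\preord u_i, \preord u_i + s - 1]$, where $s$ is the subtree size of $\preord u_i$; this is computed with one $rank_1$ followed by one $select_1$ on $B$, and the query fails exactly when the interval contains no colored identifier. Correctness is immediate: any node $w$ strictly between $\preord u_i$ and the returned $\preord u_j$ on the path is a strict descendant of $\preord u_i$ and a strict ancestor of $\preord u_j$, so $\preord u_i < w < \preord u_j$ in pre-order, and if $w$ were colored it would contradict the minimality of $\preord u_j$ among colored pre-order identifiers in $\preord u_i$'s subtree.

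For (b), the plan is to compute the predecessor $c_{prev}$ and the successor $c_{next}$ of $\preord u_i$ among the colored pre-order identifiers and return whichever of $LCA(\preord u_i, c_{prev})$ and $LCA(\preord u_i, c_{next})$ is deeper (at least one always exists because the root is blue). The combinatorial step I expect to be the main obstacle is the claim that no other colored node can yield a deeper LCA with $\preord u_i$. To establish it for a colored $c$ with pre-order identifier smaller than $c_{prev}$, set $v = LCA(\preord u_i, c)$ and $v' = LCA(\preord u_i, c_{prev})$: the subtree of $v$ contains both $c$ and $\preord u_i$ and therefore occupies a contiguous pre-order interval, so $c_{prev}$, which lies strictly between $c$ and $\preord u_i$ in pre-order, also lies inside $v$'s subtree. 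Hence $v$ is a common ancestor of $\preord u_i$ and $c_{prev}$, giving $depth(v) \leq depth(v')$. The symmetric argument handles colored nodes whose pre-order identifier exceeds $c_{next}$.

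Finally, $B$ adds $O(r\log n) + o(n)$ bits on top of the $2n + o(n)$-bit topology, and each of (a) and (b) issues only a constant number of $rank_1$, $select_1$, subtree-size, $depth$, and $LCA$ calls, each $O(1)$ time.
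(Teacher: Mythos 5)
Your proposal is correct and takes essentially the same approach as the paper: the paper likewise marks the $O(r)$ colored nodes in an entropy-compressed (RRR) bitvector aligned with the tree-traversal order (BPS open parentheses, equivalently pre-order), answers (a) by a successor query confined to the subtree's range, and answers (b) by taking LCAs with the marked predecessor and a marked successor and returning the deeper one. The only divergence is that for (b) the paper takes the first marked node \emph{after the closing parenthesis} of $\preord u_i$ (skipping its subtree) whereas you take the pre-order successor of $\preord u_i$ itself; your interval/LCA-monotonicity argument shows your variant is also correct, and the two choices coincide in the only situation where (b) is actually invoked, namely when the subtree of $\preord u_i$ contains no colored node.
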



\begin{lemma}\label{lem:uj red}
In Lemma \ref{lem:find colored} (a), if $\preord u_j$ exists then $\preord u_j$ must be red and not blue.
\end{lemma}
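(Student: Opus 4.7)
The plan is to show that $\preord u_j$ cannot be blue; combined with the hypothesis that $\preord u_j$ is colored, this immediately forces it to be red, establishing both assertions at once. The main ingredient is Lemma \ref{lem:prec}, applied inductively along the tree path from $\preord u_i$ to $\preord u_j$. Let $\preord u_i = \preord v_0, \preord v_1, \ldots, \preord v_\ell = \preord u_j$ denote this path, where $\ell \geq 1$ because $\preord u_j \neq \preord u_i$. By hypothesis, none of $\preord v_0, \ldots, \preord v_{\ell-1}$ is colored; in particular, none is red, so each has the same outgoing-label set as its co-lexicographic successor.

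The construction I would carry out is the following. Set $\preord w_0 = \preord u_{i+1}$ (well-defined since $i<n$), and for $k \geq 1$ define $\preord w_k = child_{\lambda(\preord v_k)}(\preord w_{k-1})$. The inductive hypothesis to maintain is $\preord v_{k-1} <_{pred} \preord w_{k-1}$. Because $\preord v_{k-1}$ is not red, one has $out(\preord v_{k-1}) = out(\preord w_{k-1})$, so $\lambda(\preord v_k) \in out(\preord w_{k-1})$ and $\preord w_k$ is well-defined; Lemma \ref{lem:prec} applied to $\preord v_{k-1} <_{pred} \preord w_{k-1}$ with label $\lambda(\preord v_k)$ then yields $\preord v_k <_{pred} \preord w_k$. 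Iterating up to $k=\ell$ produces $\preord v_\ell <_{pred} \preord w_\ell$ with $\lambda(\preord w_\ell) = \lambda(\preord v_\ell)$ by construction. Hence the co-lexicographic successor of $\preord u_j$ carries the same incoming label, so $\preord u_j$ is not blue. Being colored, it must therefore be red.

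The only delicate point is verifying that the inductive chain remains well-defined all the way to $k=\ell$: one needs $\preord v_0, \ldots, \preord v_{\ell-1}$ to be not red so that the outgoing-label sets match at each step. This is exactly what the hypothesis on the path guarantees, since every internal node on the path is uncolored; the endpoint $\preord v_\ell = \preord u_j$ is allowed to be colored, and its incoming label is only used in the final comparison that rules out blueness. No additional structural property of tries beyond Lemma \ref{lem:prec} and the definition of the colorings is needed.
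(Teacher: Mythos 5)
Your proof is correct and follows essentially the same route as the paper's: both arguments iterate Lemma \ref{lem:prec} along the path from $\preord u_i$ to $\preord u_j$, using that every node before $\preord u_j$ is uncolored (hence not red, so outgoing-label sets match its co-lexicographic successor) to build the adjacent path, concluding that the successor of $\preord u_j$ carries the same incoming label, so $\preord u_j$ cannot be blue and must be red. The only difference is presentational — you construct the adjacent path by a forward induction maintaining $\preord v_{k-1} <_{pred} \preord w_{k-1}$, whereas the paper phrases it as an induction on the path length peeling off the first edge — and your handling of the delicate point (well-definedness of each step, which needs only the non-redness of the internal nodes) matches the paper's.
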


We introduce the notion of \emph{adjacent paths}:

\begin{definition}\label{def:adj paths}
 We say that two paths $\preord u_{i_1} \rightarrow \preord u_{i_2} \rightsquigarrow \preord u_{i_k}$ and $\preord u_{j_1} \rightarrow \preord u_{j_2} \rightsquigarrow \preord u_{j_k}$ of the same length $k$ are \emph{adjacent} if it holds that $j_{t} = i_{t}+1$ for all $1\leq t \leq k$.
\end{definition}


\begin{lemma}\label{lem:adj paths}
Let $\Pi = \preord u_{i_1} \rightarrow \preord u_{i_2} \rightsquigarrow \preord u_{i_k}$, with $i_j<n$ for some $1\leq j \leq k$, be a path of length $k$ without blue nodes other than (possibly) $\preord u_{i_1}$ and without red nodes other than (possibly) $\preord u_{i_k}$. Then, $\preord u_{i_1+1} \rightarrow \preord u_{i_2+1} \rightsquigarrow \preord u_{i_k+1}$ is a path in the tree (adjacent to $\Pi$). 
\end{lemma}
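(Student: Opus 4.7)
The plan is to establish two facts: (i) that $i_t<n$ for every $1\leq t\leq k$, so that each co-lex successor $\preord u_{i_t+1}$ is a well-defined node; and (ii) that for each $1\leq t<k$, the edge $\preord u_{i_t+1}\rightarrow\preord u_{i_{t+1}+1}$ exists in the tree and carries the same label as the original edge $\preord u_{i_t}\rightarrow\preord u_{i_{t+1}}$.

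I would first take care of the forward step, which also drives most of the argument. Fix any $1\leq t<k$ and assume $i_t<n$. Since $t<k$, node $\preord u_{i_t}$ is not red by hypothesis, so $out(\preord u_{i_t})=out(\preord u_{i_t+1})$. Let $c=\lambda(\preord u_{i_{t+1}})$; then $c\in out(\preord u_{i_t})\cap out(\preord u_{i_t+1})$, and applying Lemma \ref{lem:prec} to the co-lex adjacency $\preord u_{i_t}<_{pred}\preord u_{i_t+1}$ yields $\preord u_{i_{t+1}}=child_c(\preord u_{i_t})<_{pred}child_c(\preord u_{i_t+1})$. This simultaneously forces $i_{t+1}+1\leq n$ (so $i_{t+1}<n$) and identifies $\preord u_{i_{t+1}+1}$ as the $c$-labeled child of $\preord u_{i_t+1}$, delivering the edge required for (ii).

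For (i), the forward step already propagates ``$i_t<n$'' rightward from the hypothesized index $j$ all the way to $t=k$. To propagate it leftward, assume $i_t<n$ and $t>1$. Since $\preord u_{i_t}$ is not blue, $\lambda(\preord u_{i_t})=\lambda(\preord u_{i_t+1})$. Thus $\preord u_{i_t}$ and $\preord u_{i_t+1}$ are distinct nodes sharing the same incoming label, so (by the contrapositive of co-lex Axiom (ii)) their parents must satisfy $\pi(\preord u_{i_t})<\pi(\preord u_{i_t+1})$, i.e., $\preord u_{i_{t-1}}<\pi(\preord u_{i_t+1})$. Since $\preord u_{i_{t-1}}$ is strictly smaller than some existing node in co-lex order, it cannot equal $\preord u_n$, whence $i_{t-1}<n$.

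Combining the two propagations starting from $j$ establishes (i) for every index; then applying the forward edge construction at each $1\leq t<k$ assembles the full adjacent path $\preord u_{i_1+1}\rightarrow\preord u_{i_2+1}\rightsquigarrow\preord u_{i_k+1}$. I expect the main subtlety to be precisely the backward step: Lemma \ref{lem:prec} naturally propagates co-lex adjacency \emph{downward} along equally-labeled edges, but here we need to climb \emph{upward} from a child to its parent, which is why we have to revert to the co-lex axioms directly rather than reuse Lemma \ref{lem:prec}.
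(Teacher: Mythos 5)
Your proof is correct, and it takes a genuinely different route from the paper's. The paper splits $\Pi$ at the hypothesized index $j$ into a lower subpath $\Pi''$ (handled exactly as your forward step: induction downward via Lemma \ref{lem:prec}) and an upper subpath $\Pi'$, for which it proves the stronger statement that the parent of $\preord u_{i_t+1}$ is precisely $\preord u_{i_{t-1}+1}$; this is done by assuming an intermediate node $\preord w$ with $\preord u_{i_{t-1}} <_{pred} \preord w < \pi(\preord u_{i_t+1})$ and deriving a contradiction in two cases ($c\in out(\preord w)$ via Axiom (ii), $c\notin out(\preord w)$ via non-redness of $\preord u_{i_{t-1}}$), thereby building the upper edges of the adjacent path directly while climbing. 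You instead decouple the two concerns: your upward pass only propagates the weak fact $i_{t-1}<n$ (using non-blueness, Axiom (ii), and the observation that $\preord u_{i_{t-1}}$ is dominated in co-lex order, hence is not $\preord u_n$), and then you let the single downward construction via Lemma \ref{lem:prec} build \emph{all} edges of the adjacent path, including those above $j$. This buys a cleaner argument: the hand-rolled predecessor contradiction of the paper's $\Pi'$ case is replaced by a second use of the already-proved Lemma \ref{lem:prec}, and both approaches consume exactly the same hypotheses (non-redness of interior nodes for the child step, non-blueness for the upward step). One tiny point worth making explicit in your backward step: the contrapositive of Axiom (ii) only excludes $\pi(\preord u_{i_t+1}) < \pi(\preord u_{i_t})$; the case $\pi(\preord u_{i_t}) = \pi(\preord u_{i_t+1})$ must be ruled out separately, which follows from determinism of the trie (a node has at most one outgoing edge per label), since the two nodes are distinct but share the incoming label.
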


We furthermore explicitly store (sample) the value of function $\phi$ on the following nodes: (1) on each colored node $\preord u_i$, we explicitly store $\phi(\preord u_i) = \preord u_{i+1}$. We call these \emph{$\phi$-samples of type 1}. (2) Let $\preord u_i$, $i<n$, be such that $c\in out(\preord u_{i})$ and $c\notin out(\preord u_{i+1})$. Let moreover $\preord u_{j} = child_c(\preord u_{i})$. If $j<n$, then we explicitly store $\phi(\preord u_j) = \preord u_{j+1}$ on node $\preord u_{j}$. We call these \emph{$\phi$-samples of type 2}. 
Note that a $\phi$-sample could be both of type 1 and 2 (for example, see Figure \ref{fig:example1}, node 7). 
Since samples of type 1 are stored only on colored nodes and samples of type 2 correspond to run breaks, in total we explicitly store $O(r)$ $\phi$-samples. Figure \ref{fig:example1} depicts these samples as orange dashed arrows.
The color(s) and $\phi$-sample associated with colored/$\phi$-sampled pre-order nodes can be retrieved in constant time and $O(r\log n) + o(n)$ bits of space using an entropy-compressed bitvector~\cite{RRR} marking such nodes.

We are now ready to show how to compute $\phi(\preord u_i) = \preord u_{i+1}$ for any $1\leq i < n$.
We break our algorithm into cases. 
In Appendix \ref{app:examples climb} we discuss examples of all cases based on the trie of Figure \ref{fig:example1}.

\paragraph{\textbf{Case 1}: the complete subtree rooted in $\preord u_i$ contains colored nodes.} 
See Figure \ref{fig:cases1-2.1} (left).
If $\preord u_i$ is colored (red, blue, or both), then $\phi(\preord u_i)$ is explicitly stored.
Otherwise, we use Lemma \ref{lem:find colored} (a) to find a colored node $\preord u_j \neq \preord u_i$ in the complete subtree rooted in $\preord u_i$ such that no  node other than $\preord u_j$ on the path $\Pi = \preord u_i \rightsquigarrow \preord u_j$ is colored.
In particular, by Lemma \ref{lem:uj red} node $\preord u_j$ must be red and no node on the path is blue.
Since $\Pi$ enjoys this property and $i<n$, we can apply
Lemma \ref{lem:adj paths} to it and obtain that $\preord u_{i+1} \rightsquigarrow \preord u_{j+1}$ is a valid path with $t = depth(\preord u_j) - depth(\preord u_i)$ edges and it is adjacent to $\Pi$. 
We find $\preord u_{j+1} = \phi(\preord u_j)$, which is stored explicitly since $\preord u_j$ is red.
Finally, we jump to $\preord u_{i+1}$ with a level ancestor query by $t$ levels from $\preord u_{j+1}$. More formally, we obtain:
$$
\phi(\preord u_{i}) = \preord u_{i+1} =  LAQ(\phi(\preord u_j), depth(\preord u_j) - depth(\preord u_i)).
$$
We note that on tries being simple paths (i.e. strings) it is always the case that $\preord u_i$ has a red descendant (that is, the unique leaf).
It follows that the above equation is always applied when the tree is a string. In fact, in this case the equation reduces to what is implemented in the \emph{r-index} data structure~\cite{r-index}. On trees, however, things are more complicated: it is not always the case that $\preord u_i$ has colored descendants. This case is treated below.

\begin{figure}
	\vspace{-10pt}
	\begin{minipage}{0.44\textwidth}
		\centering
		\includegraphics[width=0.95\textwidth]{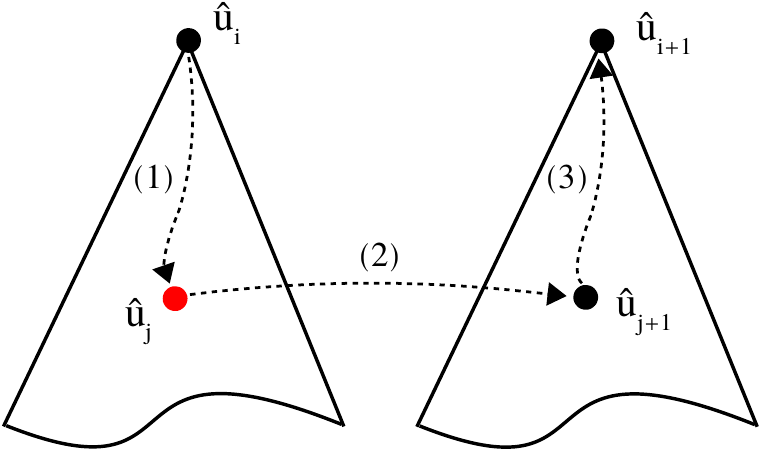}
	\end{minipage}
	\begin{minipage}{0.1\textwidth}
		\  
	\end{minipage}
	\begin{minipage}{0.44\textwidth}
		\centering
		\includegraphics[width=\textwidth]{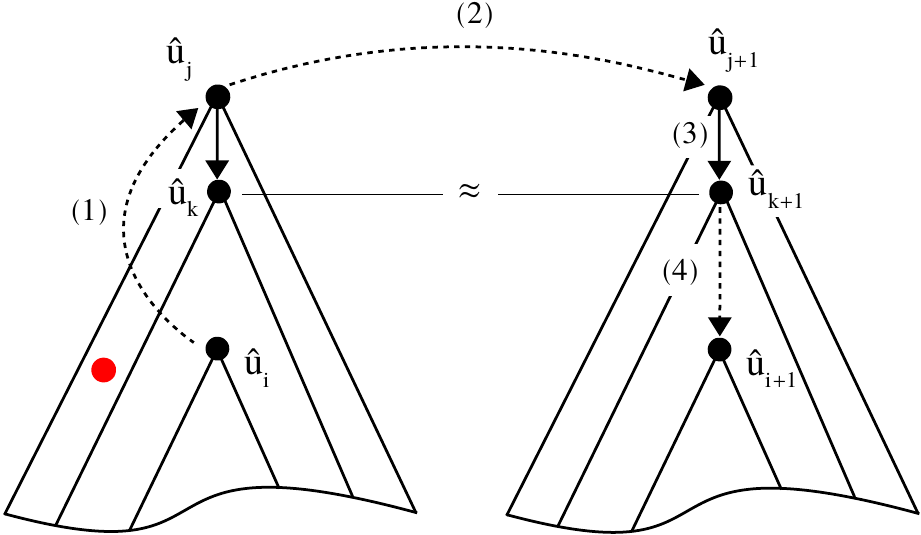}
	\end{minipage}
	\caption{\textbf{Left}. Locate, \emph{Case 1}: node $\preord u_i$ has colored descendants. (1) Find colored descendant $\preord u_j$. (2) Follow $\phi$-sample. (3) Level Ancestor Query. \textbf{Right}. Locate, \emph{Case 2.1}: node $\preord u_i$ does not have colored descendants, $\preord u_j$ is the lowest ancestor that does, and $\preord u_j$ is not red. (1) Find $\preord u_j$. (2) Apply \emph{Case 1} to find $\preord u_{j+1}$. (3) Descend to child $\preord u_{k+1}$, and note that $\preord u_{k} \approx \preord u_{k+1}$. (4) Find $\preord u_{i+1}$ with an \emph{isomorphic descendant} query.}\label{fig:cases1-2.1}
\end{figure}

\paragraph{\textbf{Case 2}: the complete subtree rooted in $\preord u_i$ does not contain any colored node.} 
See Figures \ref{fig:cases1-2.1} (right) and \ref{fig:cases2.2.1-2.2.2}.
The idea is to navigate upwards instead of downwards as done in Case 1. 
We first find, using Lemma \ref{lem:find colored} (b), the lowest ancestor $\preord u_j$ of $\preord u_i$ such that the complete subtree rooted in $\preord u_j$ contains a colored node. 
We further distinguish two sub-cases, depending on whether $\preord u_j$ is red or not.

\paragraph{\textbf{Case 2.1}: $\preord u_j$ is not red.} 

See Figure \ref{fig:cases1-2.1} (right).
Consider the path $\preord u_j \rightarrow \preord u_{k} \rightsquigarrow \preord u_i$, where $\preord u_{k}$ is child of $\preord u_j$ on the path. 
Index $k$ might coincide with $i$; in this case, the path is simply $\preord u_j \rightarrow \preord u_i$.
Note that $i<n$ and that no node $\preord v$ in this path is colored except, possibly, $\preord u_j$ (which might be blue), otherwise we would have chosen $\preord v$ in place of $\preord u_j$. 
Then, we can apply Lemma \ref{lem:adj paths} and obtain that $\preord u_{j+1} \rightarrow \preord u_{k+1} \rightsquigarrow \preord u_{i+1}$ must also be a path. In particular, $j<n$.

Next, we show that we can retrieve $\phi(\preord u_j) = \preord u_{j+1}$ in constant time. Since $\preord u_j$ is either not colored or blue, then (by definition of $\preord u_j$) there must be a colored node (possibly $\preord u_j$ itself) in the subtree rooted in $\preord u_j$. Then, since $j<n$ we can apply \emph{Case 1} and find $\phi(\preord u_j) = \preord u_{j+1}$ in constant time. 

Let $t = depth(\preord u_i) - depth(\preord u_j)$. We compute $\preord u_k = LAQ(\preord u_i, t-1)$ and get its rank $q$ among its siblings with $q = sr(\preord u_k)$. Since $\preord u_j$ is not red, we have that 
the two nodes have the same outgoing labels. 
Since $\preord u_k$ is not blue, we have $\lambda(\preord u_k) = \lambda(\preord u_{k+1})$. These two observations imply that $\preord u_{k+1}$ is the $q$-th children of $\preord u_{j+1}$ as well: we compute it as $\preord u_{k+1} = cbr(\preord u_{j+1},q)$. 
If $t=1$ then $\preord u_{k+1}$ coincides with $\preord u_{i+1}$ and we are done. Otherwise, since $\preord u_j$ is the lowest ancestor of $\preord u_i$ such that the complete subtree rooted in $\preord u_j$ contains a colored node, the subtree rooted in $\preord u_{k}$ does not contain any colored node. By Corollary \ref{cor:no color -> isomorphic subtrees}, we obtain $\preord u_{k} \approx \preord u_{k+1}$: the two complete subtrees are isomorphic. But then, we can finally find $\preord u_{i+1}$ with an isomorphic descendant query: $\preord u_{i+1} = ISD(\preord u_{k}, \preord u_{i}, \preord u_{k+1})$.

\paragraph{\textbf{Case 2.2}: $\preord u_j$ is red.}
See Figure \ref{fig:cases2.2.1-2.2.2}.
Consider the path $\preord u_j \rightarrow \preord u_{k} \rightsquigarrow \preord u_i$, where $\preord u_{k}$ is child of $\preord u_j$ on the path ($k$ might coincide with $i$; in this case, the path is simply $\preord u_j \rightarrow \preord u_i$). 
Let $t = depth(\preord u_i) - depth(\preord u_j)$. We find $\preord u_k = LAQ(\preord u_i,t-1)$.
We distinguish two sub-cases.

\begin{figure}
	\begin{minipage}{0.44\textwidth}
		\centering
		\includegraphics[width=\textwidth]{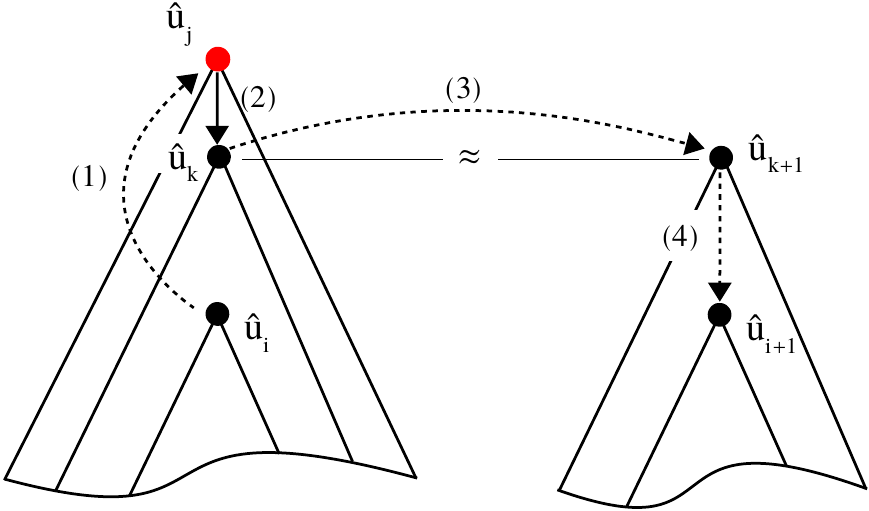}
	\end{minipage}
	\begin{minipage}{0.1\textwidth}
		\  
	\end{minipage}
	\begin{minipage}{0.44\textwidth}
		\centering
		\includegraphics[width=\textwidth]{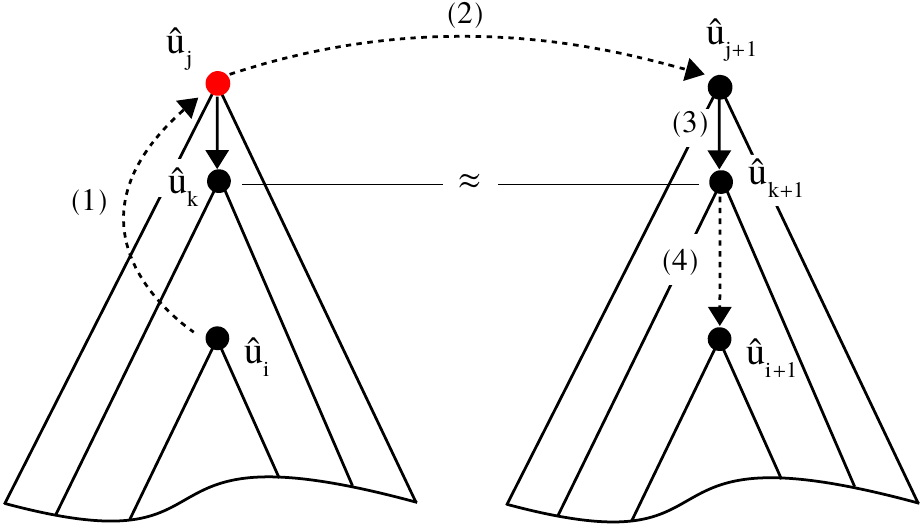}
	\end{minipage}
	\caption{\textbf{Left}. Locate, \emph{Case 2.2.1}: 
		node $\preord u_i$  does not have colored descendants, $\preord u_j$ is the lowest ancestor that does, $\preord u_j$ is red, and $\phi(\preord u_k)$ is a $\phi$-sample. (1) Find $\preord u_j$. (2) Descend to $\preord u_k$. (3) Follow $\phi$-sample and note that $\preord u_{k} \approx \preord u_{k+1}$. (4) Find $\preord u_{i+1}$ with an \emph{isomorphic descendant} query. \textbf{Right}. Locate, \emph{Case 2.2.2}: 
		node $\preord u_i$  does not have colored descendants, $\preord u_j$ is the lowest ancestor that does, $\preord u_j$ is red, and $\phi(\preord u_k)$ is not a $\phi$-sample. (1) Find $\preord u_j$. (2) Follow $\phi$-sample. (3) Descend to child $\preord u_{k+1}$ using an \emph{isomorphic child} query and note that $\preord u_{k} \approx \preord u_{k+1}$. (4) Find $\preord u_{i+1}$ with an \emph{isomorphic descendant} query. }\label{fig:cases2.2.1-2.2.2}
\end{figure}

\paragraph{\textbf{Case 2.2.1}: $\phi(\preord u_k)$ is a $\phi$-sample of type 2.} 
See Figure \ref{fig:cases2.2.1-2.2.2} (left).
Then, we retrieve $\preord u_{k+1} = \phi(\preord u_k)$ in constant time. Since $\preord u_j$ is the lowest ancestor of $\preord u_i$ such that the complete subtree rooted in $\preord u_j$ contains a colored node, the subtree rooted in $\preord u_{k}$ does not contain any colored node. By Corollary \ref{cor:no color -> isomorphic subtrees}, this implies that $\preord u_{k} \approx \preord u_{k+1}$: the two complete subtrees are isomorphic. But then, we can  find $\preord u_{i+1}$ with an isomorphic descendant query: $\preord u_{i+1} = ISD(\preord u_{k}, \preord u_{i}, \preord u_{k+1})$. 

\paragraph{\textbf{Case 2.2.2}:
$\phi(\preord u_k)$ is not a $\phi$-sample of type 2.} 
See Figure \ref{fig:cases2.2.1-2.2.2} (right).
Since $i<n$ and $\preord u_{k} \rightsquigarrow \preord u_{i}$ does not contain colored nodes then by Lemma \ref{lem:adj paths} $\preord u_{k+1} \rightsquigarrow \preord u_{i+1}$ is a path in the tree.
Since $\preord u_{k}$ is not blue, we have that $\lambda(\preord u_{k}) = \lambda(\preord u_{k+1})$. In particular, $\lambda(\preord u_{k+1}) \neq \#$ (because $\lambda(\preord v)=\#$ only for $\preord v = 1$) so $\preord u_{k+1}$ is not the root.
Let $\preord u_{j'}$ be the parent of $\preord u_{k+1}$.
By co-lexicographic Axiom (ii) (see the beginning of Section \ref{sec:XBWT}), $\lambda(\preord u_{k}) = \lambda(\preord u_{k+1})$ and $\preord u_{k} < \preord u_{k+1}$ imply  $\preord u_{j} < \preord u_{j'}$ (otherwise, Axiom (ii) would force $\preord u_{k+1} < \preord u_{k}$, a contradiction). 
We can say more: since $\phi(\preord u_k)$ is not a $\phi$-sample of type 2, then it must be the case that $\preord u_{j} <_{pred} \preord u_{j'} = \preord u_{j+1}$. Assume, for contradiction, that this were not true, i.e. that there existed a node $\preord v$ such that $\preord u_{j} <_{pred} \preord v < \preord u_{j'}$. 
Let $c = \lambda(\preord u_k)$.
The cases are two: (a) $c \in out(\preord v)$. Then, by co-lexicographic Axiom (ii) it must be the case that $\preord u_k < child_c(\preord v) < \preord u_{k+1}$, a contradiction. (b) $c \notin out(\preord v)$. Then, $j$ would be a $c$-run break and $\phi(\preord u_k)$ would be a $\phi$-sample of type 2, a contradiction. 

Since $\preord u_{j}$ is red, $\phi(\preord u_{j}) = \preord u_{j+1}$ is a $\phi$-sample of type 1 and we can retrieve it in constant time. 
Let $q = sr(\preord u_k)$: node $\preord u_k$ is the $q$-th among the children of its parent $\preord u_j$.
Since $\preord u_{j}$ is red, then $out(\preord u_{j}) \neq out(\preord u_{j+1})$.
This and the fact that $\lambda(\preord u_k) = \lambda(\preord u_{k+1})$ imply that we can find $\preord u_{k+1}$ with an isomorphic child operation (Operation \ref{op:isc}):  
$\preord u_{k+1} = cbr(\preord u_{j+1},ISC(\preord u_j,q))$.

Since $\preord u_j$ is the lowest ancestor of $\preord u_i$ such that the complete subtree rooted in $\preord u_j$ contains a colored node, the subtree rooted in $\preord u_{k}$ does not contain any colored node. By Corollary \ref{cor:no color -> isomorphic subtrees}, this implies that $\preord u_{k} \approx \preord u_{k+1}$: the two complete subtrees are isomorphic. But then, we can finally find $\preord u_{i+1}$ with an isomorphic descendant query: $\preord u_{i+1} = ISD(\preord u_{k}, \preord u_{i}, \preord u_{k+1})$.

We obtain our final result: 

\begin{theorem}\label{th:locate}
	Let $\mathcal T$ be a trie with $n$ nodes whose XBWT has $r$ runs.
	Our index takes $2n + o(n) + O(r\log n)$ bits of space and locates the pre-order identifiers of the $occ$ nodes reached by a path labeled with  $P\in\Sigma^m$ in $O\left(m\log\sigma + occ\right)$ time. 
\end{theorem}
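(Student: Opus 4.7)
The plan is to assemble the components developed throughout Section~\ref{sec:locate} into a single algorithm and then separately audit its space and its time. After invoking the Toehold step (Lemma~\ref{lem:toehold}) to obtain the co-lexicographic range $[\wheel\ell,\wheel r]$ together with the starting pre-order identifier $\preord u_{\wheel\ell}$, I would iterate the function $\phi$ exactly $occ-1$ times, each call turning $\preord u_i$ into $\preord u_{i+1}$. This delivers the complete list $\preord u_{\wheel\ell},\dots,\preord u_{\wheel r}$. Correctness is immediate from the definition $\phi(\preord u_i)=\preord u_{i+1}$ and from the fact that the set of nodes reached by $P$ forms a contiguous co-lex interval.

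For the space, I would simply sum the separate contributions: $2n+o(n)$ bits for the Navarro--Sadakane succinct topology~\cite{suctrees}, which by Lemma~\ref{lemma:isd} also supports $ISD$ in $O(1)$ time; $O(r\log n)+o(n)$ bits each for the child-rank structure of Lemma~\ref{lem:cr}, the $ISC$ structure of Lemma~\ref{lemma:isc}, and the Toehold support of Lemma~\ref{lem:toehold}; and a further $O(r\log n)$ bits for the explicit $\phi$-samples, together with an entropy-compressed bitvector~\cite{RRR} marking colored and $\phi$-sampled nodes for constant-time access. The number of samples is $O(r)$, since there are at most $r'\le 3r$ red nodes (one per RL-XBWT block boundary, by Lemma~\ref{lem:RL-BWT}), at most $\sigma\le r$ blue nodes, and at most $r$ type-2 samples (one per run break). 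Summing gives the announced $2n+o(n)+O(r\log n)$ bits.

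For the time, the Toehold phase already accounts for the $O(m\log\sigma)$ summand, so what remains is to certify that each $\phi$ evaluation runs in $O(1)$; this is precisely what the case analysis of Cases~1, 2.1, 2.2.1, and~2.2.2 is designed to establish. In each branch the computation is a bounded-length sequence of the primitives~\ref{op:cr}--\ref{op:isc} (each $O(1)$ by the structures above) together with a bounded number of $\phi$-sample lookups. The main obstacle I expect is verifying that the case split is exhaustive and \emph{non-recursive}: Case~2.1 invokes Case~1 on the ancestor $\preord u_j$, and I would argue this does not cascade because $\preord u_j$ is chosen as the lowest ancestor whose complete subtree contains a colored node, so the inner Case~1 call resolves immediately through an explicit type-1 sample; Cases~2.2.1 and~2.2.2 are distinguished precisely so that the $\phi$-sample required (of type~2 or type~1 respectively) is guaranteed to be present. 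Once these invariants are in place, summing the $O(1)$ cost over the $occ$ iterations and adding the Toehold cost yields the claimed $O(m\log\sigma+occ)$ bound, completing the proof.
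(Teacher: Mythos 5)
Your proposal is correct and follows essentially the same route as the paper: the theorem is precisely the assembly of the Toehold step (Lemma~\ref{lem:toehold}) with $occ-1$ constant-time evaluations of $\phi$ via the case analysis (Cases~1, 2.1, 2.2.1, 2.2.2), and your space accounting ($2n+o(n)$ topology plus $O(r\log n)+o(n)$ for the auxiliary structures and the $O(r)$ colored nodes and $\phi$-samples) matches the paper's. The only slight imprecision is that in Case~2.1 the ancestor $\preord u_j$ need not itself carry a type-1 sample (it may be uncolored, in which case the nested Case~1 call finds a red descendant via Lemma~\ref{lem:find colored}(a) and follows its sample plus one $LAQ$), but this does not affect the non-recursive $O(1)$ bound you claim.
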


In Appendix \ref{app:entropy bound} we bound the size of our index as a function of the trie's worst-case entropy $\mathcal H^{wc}_k$.
Note that the whole locate machinery, as well as the edges' labels, fits within compressed space on top of the succinct topology. Moreover, the topology is stored using Navarro and Sadakane's representation~\cite{suctrees}, which supports much more advanced navigation queries than the XBWT~\cite{XBWT}. We note that improvements in navigation queries \ref{op:depth}-\ref{op:isc} on compressed trees will have a direct impact on our index. We leave it as an exciting open question whether it is possible to support those queries within $O(r)$ words of space, thus reducing the size of our index to $O(r)$ words \emph{in total}.

\newpage

\appendix

\section{Proof of Lemma \ref{lem:locate XBWT}}\label{app:locate XBWT}

\paragraph{\textbf{Claim}.} For any $1\leq t \leq n$, the XBWT can be augmented with additional $O((n/t)\log n) + o(n)$ bits so that, after counting, the pre-order identifiers of all $occ$ nodes reached by a path labeled with a pattern $P\in\Sigma^m$ can be returned in $O(occ\cdot t\log\log_w \sigma)$ time.
\paragraph{\textbf{Proof}.} 

We exploit the fundamental property (used also in count queries) that characters occur in the same relative order in XBWT and in the sequence $\Lambda = \lambda(\preord u_1), \dots, \lambda(\preord u_n)$ (see also Figure \ref{fig:example}): the $i$-th occurrence of character $c\in\Sigma$ in the sequence $XBWT=out(\preord u_1), \dots, out(\preord u_n)$ corresponds to the same edge associated with the $i$-th occurrence of character $c\in\Sigma$ in the sequence $\Lambda$. 
Using up-to-date rank and select data structures~\cite{optimal-rank-select},
this property allows performing local navigation operations (parent, children): to move to the parent of co-lex node $\wheel u = j$, count the number $q$ of occurrences of $\lambda(\preord u_j)$ occurring in $\lambda(\preord u_1), \dots, \lambda(\preord u_j)$ (one constant-time rank query on the bitvector representing $\Lambda$, see~\cite{XBWT}), and jump to the $q$-th occurrence of $\lambda(\preord u_j)$ in XBWT (one constant-time select query using the structures of~\cite{optimal-rank-select}). 
Using the inverse operation (a \emph{rank} on XBWT), the data structure of~\cite{optimal-rank-select} allows moving in $O(\log\log_w \sigma)$ time to the children of any node.  
See Ferragina et al.~\cite{XBWT} for a more detailed discussion of these operations.

After \emph{count}, the operation \emph{locate} can be solved given the ability of converting any XBWT position $i$ to the corresponding pre-order number $\preord u_i$ (but this is not the only option, see Section \ref{sec:locate}). 
To perform this conversion, we use a corrected version of Arroyuelo et al.'s strategy \cite[Sec. 5.1]{ANSalgor10}: we sample pre-order numbers and compute non-sampled values by visiting a small sub-tree using the XBWT primitives\footnote{Their solution has a problem that we fix here (personal communication with the authors): without using a tree decomposition, a single operation on the XBWT is not always sufficient to move to the next pre-order node.}. 
Fix a parameter $1\leq t \leq n$. 
We use the tree covering procedure described in~\cite[Sec. 2.1]{geary2006succinct} to decompose $\mathcal T$ in $\Theta(n/t)$ sub-trees containing $O(t)$ nodes each. Two sub-trees are either disjoint or intersect only at their common root. Let $\preord u_i$ be the root of a sub-tree, and consider the following quantities: (i) the pre-order identifier $\preord u_i$ and (ii) the number of nodes $size(\preord u_i)$ contained in the complete sub-tree (that is, down to the leaves of $\mathcal T$) rooted in $\preord u_i$. 
We store information (i) explicitly in XBWT order, for all sub-tree roots. We moreover store the partial sums of the values (ii) in XBWT order. All XBWT positions corresponding to a sub-tree root are moreover marked using a zero-order compressed bitvector supporting constant-time rank and select queries and taking $o(n) + O((n/t)\log n)$ bits of space~\cite{RRR}. This scheme supports retrieving in constant time the values (i) and (ii) associated with those XBWT positions. Overall, these structures take $o(n) + O((n/t)\log n)$ bits of space.
At this point, let $j$ be a XBWT position for which we want to compute the corresponding pre-order $\preord u_j$. Without loss of generality, we may assume that $j$ is not a sub-tree root (otherwise $\preord u_j$ is explicitly sampled). 
By iterating the XBWT parent operation, we first move upwards until finding the root $\preord u_k$ of the sub-tree $\mathcal T'$ (of size $O(t)$) containing $\preord u_j$. 
From here, using the XBWT navigation primitives we perform an Euler tour of $\mathcal T'$. 
We now show that the sampled values (i) and the partial sum on values (ii) are sufficient to reconstruct the pre-order value of all visited nodes along the Euler tour (in particular, $\preord u_j$).
We maintain a counter $PRE$, initialized at $PRE=\preord u_k$ at the beginning of the tour. At each step, counter $PRE$ will coincide with the pre-order number of the nodes seen for the first time along the visit. 
Assume we are on node $\preord u_z$ during the visit, and that the visit requires us to move to the next non-visited child $\preord u_{z'}$ of $\preord u_z$. 
We say that a node of $\mathcal T$ is \emph{non-root} if it is not the root of a sub-tree (\emph{root} otherwise).
Let $\preord u_{r_1}, \dots, \preord u_{r_c}$ be the $c$ root children of $\preord u_{z}$, between $\preord u_{z'}$ and its previous non-root sibling, in lexicographic order. 
If there is no previous non-root sibling, then $\preord u_{r_1}, \dots, \preord u_{r_c}$ are all the root siblings of $\preord u_{z'}$ preceding it. 
Note that this sequence could be empty.
By construction, $size(\preord u_{r_1}), \dots, size(\preord u_{r_c})$ are adjacent in the partial sum array, so we use this array to increase $PRE = PRE + 1 + \sum_{i=1}^c size(\preord u_{r_i})$ in constant time. Then, $PRE$ is precisely the pre-order value $\preord u_{z'}$: a pre-order visit of the complete subtrees rooted in $\preord u_{r_1}, \dots, \preord u_{r_c}$ would have the same effect on $PRE$.
The other case to consider is when $\preord u_z$ has no more non-visited non-root children (including the case where it is a leaf or it has no non-root children at all).
Then, before moving to the parent of $\preord u_z$ we increase $PRE = PRE + 1 + \sum_{i=1}^c size(\preord u_{r_i})$, where  $\preord u_{r_1}, \dots, \preord u_{r_c}$ are the root children of $\preord u_z$ following its last visited non-root children (if any, otherwise they are all root children of $\preord u_z$).

Note that the Euler tour can be implemented using $O(1)$ working space. This yields our claim. \qed

\section{Proof of Theorem \ref{th:r<nHk}}

\paragraph{\textbf{Claim}.} 
	The number $r$ of XBWT runs is always at most $\mathcal H^{wc}_k + \sigma^{k+1}$ for any $k\geq 0$, where $\mathcal H^{wc}_k$ is the trie's $k$-th order worst-case entropy (Definition \ref{def:Hk}). 

\paragraph{\textbf{Proof}.} 

	Let $OUT_i = out(\preord u_i)$, and let $RLE_c(i,j)$ be the number of $c$-run breaks in the sequence $X = OUT_i, \dots, OUT_j$: $RLE_c(i,j)$ increases by one unit for every $i \leq t \leq j$ such that $c\in OUT_t$ and either $t=j$ or  $c\notin OUT_{t+1}$. We denote $RLE(i,j) = \sum_{c\in\Sigma} RLE_c(i,j)$. Note that $r_c = RLE_c(1,n)$ and $r = RLE(1,n)$. For any partition $[i_1,i_2],[i_2+1, i_3], \dots, [i_m+1,i_{m+1}]$ of the interval $[1,n]$ into $m$ sub-intervals, it is easy to see that 
	\begin{equation}\label{r<sum rc}
	r = RLE(1,n) \leq \sum_{j=1}^m RLE(i_{j},i_{j+1})
	\end{equation}
	since the right-hand side has the same run breaks as the left-hand side, plus one more run break for the last occurrence of each character in each sub-interval. We now consider the partition into sub-intervals induced by the contexts of length $k$: we put in the same interval $cover(\rho) = OUT_i, \dots, OUT_j$ the outgoing labels of all nodes $\preord u_t$ having the same context $\rho = \pi_k[\preord u_t]$ (note that, by definition of $<$, such nodes form a consecutive range). To prove our thesis, we are going to show that $RLE(i,j) \leq \mathcal H^{wc}(cover(\rho)) + \sigma$ for any such interval $[i,j]$ corresponding to context $\rho$. 
	Let $n' = j-i+1$ and $n'_c$ be the number of occurrences of $c\in\Sigma$ in the sequence of sets $cover(\rho) = OUT_i, \dots, OUT_j$. We first prove $RLE_c(i,j) \leq n'_c\log_2(n'/n'_c) + 1$ for any character $c$ such that $n'_c>0$ (note: if $n'_c=0$ then $c$ does not contribute to $RLE(i,j)$ nor to the worst-case entropy of the interval). 
	
	Build a binary sequence $S[1,n']$ such that $S[t] = 1$ if and only if $c\in OUT_{i+t-1}$. 
	Letting $r_x(S)$ be the number of equal-letter maximal runs of symbol $x\in\{0,1\}$ in $S$, 
	by definition we have $r_1(S) = RLE_c(i,j)$. Note that $r_1(S) \leq r_0(S) +1$.
	Note also that $r_1(S) \leq n'_c$ and $r_0(S) \leq n'-n'_c$. From these inequalities we obtain that $RLE_c(i,j) = r_1(S) \leq \min\{n'_c, n'-n'_c+1\} \leq \min\{n'_c, n'-n'_c\} + 1$ always holds. 
	
	The next step is to prove $\min\{n'_c, n'-n'_c\} \leq n'_c\log_2(n'/n'_c)$. 
	We are going to prove this analytically by extending the domain of $n'_c$ and $n'$ to the whole $\mathbb R^+$, with the constraint $1\leq n'_c \leq n'$.
	If $n'_c < n'/2$, the inequality reduces to $n'_c \leq n'_c\log_2(n'/n'_c)$ which is obviously true in the considered range. If $n'_c \geq n'/2$, the inequality reduces to $n'-n'_c \leq n'_c\log_2(n'/n'_c)$. Let us define $\epsilon = n'_c/n'$. The inequality further simplifies to $f(\epsilon) = \epsilon - \epsilon \log_2\epsilon - 1 \geq 0$ for $0.5\leq \epsilon \leq 1$. The derivative $f'(\epsilon) = 1-\log_2\epsilon - \log_2e$ goes to zero in $f'(2/e) = 0$, is positive for $\epsilon < 2/e$ and negative for $\epsilon > 2/e$. Since $0.5 \leq 2/e \leq 1$, we obtain our claim: first, $f(0.5) = 0$, then $f(\epsilon)$ is increasing until $\epsilon = 2/e$, and finally it decreases until reaching $f(1) = 0$.
	
	From the above, we obtain that $RLE_c(i,j) \leq 1+ n'_c\log_2(n'/n'_c)$ for any character $c$ such that $n'_c>0$\ \footnote{Note that summing over all $c\in\Sigma$, this inequality implies $RLE(i,j) \leq \sigma + \sum_{c\in\Sigma} n'_c\log_2(n'/n'_c)$. The latter summation essentially coincides with the definition of zero-order empirical entropy on strings. However, on tries the interpretation of this quantity is not clear as in the string domain. The following steps yield a bound based on the worst-case entropy of the trie, which has a more straightforward interpretation.}. Since $(n'/n'_c)^{n'_c} \leq {n'\choose n'_c}$, we obtain $RLE_c(i,j) \leq 1+\log_2 {n'\choose n'_c}$. Summing both sides for all $c\in\Sigma$, we obtain $RLE(i,j) \leq \sigma + \sum_{c\in\Sigma}\log_2 {n'\choose n'_c} = \sigma + \mathcal H^{wc}(cover(\rho))$. 
	On the trie's paths there are in total at most $\sigma^k$ different contexts $\rho \in \Sigma^{k}$. Summing both sides of the inequality for all possible (at most) $\sigma^k$ contexts $\rho$ and applying Definition \ref{def:Hk} and Inequality \ref{r<sum rc}, we obtain $r \leq \mathcal H_k^{wc} + \sigma^{k+1}$. \qed

\section{Relations with Wheeler Automata}\label{sec:WA}

The smallest Wheeler Deterministic Finite-state Automaton (WDFA)~\cite{SODA20,GAGIE201767} equivalent to $\mathcal T$ can also be considered as a compressed  representation of the trie. 
This is the smallest DFA equivalent to the trie for which the co-lexicographic axioms (i) and (ii) defined at the beginning of Section \ref{sec:XBWT} hold~\cite{SODA20,GAGIE201767}.
In this section we show that these combinatorial objects and the XBWT  are deeply related. 
We start by introducing two equivalence relations between nodes that will play a fundamental role throughout the paper.
We write
$\preord u \equiv^r \preord v$ if and only if
$out(\preord u) = out(\preord v)$.
Note that the following property holds: for $i<n$, we have $\preord u_i \not\equiv^r \preord u_{i+1}$ if and only if $i$ is a run break. 
The second equivalence relation is a refinement of $\equiv^r$ and captures a slightly stronger relation than isomorphism: we write $\preord u \equiv \preord v$ if and only if $\lambda(\preord u) = \lambda(\preord v)$ and $\preord u \approx \preord v$.  
Clearly, $\equiv$ is a refinement of $\equiv^r$: if $\preord u \equiv \preord v$, then $\preord u \equiv^r \preord v$. 
The \emph{convex closure} $\equiv_<$ of $\equiv$ with respect to the order $<$ is defined as follows:
$\preord u_i \equiv_< \preord u_j$  if and only if $\preord u_i \equiv \preord u_j\  \wedge\  \forall\ \preord u_k \left(\min\{i,j\} < k < \max\{i,j\} \Rightarrow  \preord u_k \equiv \preord u_i\right)$.
The convex closures $\equiv^r_<$ and $\approx_<$ of $\equiv^r$ and $\approx$ are defined analogously. Note that the equivalence classes of $\equiv^r_<$ correspond to the RL-XBWT blocks. Note also that (see Figure \ref{fig:example})
$\equiv_<$, $\equiv^r_<$, and $\approx_<$ are refinements of $\equiv$, $\equiv^r$ and $\approx$, respectively, and $\equiv_<$ is a refinement of $\approx_<$, which in turn is a refinement of $\equiv^r_<$.
Relation $\equiv_<$ has been introduced for the first time (with the symbol $\equiv_w$) by Alanko et al.~\cite{SODA20}, who prove that the quotient automaton $\mathcal T/_{\equiv_{<}}$ is the minimum WDFA equivalent to $\mathcal T$~\cite[Thm 4.1]{SODA20}.
We  show that $r$ is a lower bound to the size (number of edges) $\omega$ of such automaton:

\begin{theorem}\label{th:r<w}
	Let $\omega$ be the number of edges of the minimum WDFA recognizing the same language of $\mathcal T$. Then, $r \leq \omega$.
\end{theorem}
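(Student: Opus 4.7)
The plan is to count XBWT runs character by character and relate each $c$-run to edges of the minimum WDFA. I would start by recalling from the preceding discussion that the states of $\mathcal T/_{\equiv_<}$ are exactly the $\equiv_<$-equivalence classes, that each such class is co-lexicographically convex (i.e.\ a contiguous range of nodes in the order $<$), and that $\equiv_<$ refines $\equiv^r_<$. In particular, the outgoing-label set $out$ is constant on each $\equiv_<$-class $Q$, so one may speak of $out(Q)$, and the total number of WDFA edges is
\[
\omega \;=\; \sum_Q |out(Q)|.
\]

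Next, for each character $c\in\Sigma$, I would bound $r_c$ by the number $n_c$ of $\equiv_<$-classes whose common outgoing-label set contains $c$. The key observation is that any $c$-run break at a position $i<n$ satisfies $c \in out(\preord u_i)$ and $c \notin out(\preord u_{i+1})$, hence $out(\preord u_i) \neq out(\preord u_{i+1})$, which forces $\preord u_i \not\equiv^r \preord u_{i+1}$ and therefore $\preord u_i \not\equiv_< \preord u_{i+1}$ (since $\equiv_<$ refines $\equiv^r$). Thus every $c$-run break (other than the harmless one at $i=n$) coincides with a $\equiv_<$-class boundary. Combined with the fact that $out$ is constant on each class, this shows that any maximal $c$-run in the XBWT is formed as the concatenation of one or more \emph{consecutive} $\equiv_<$-classes all of whose outgoing sets contain $c$. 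Consequently $r_c \leq n_c$.

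Finally, summing over $\Sigma$ and swapping the order of summation gives
\[
r \;=\; \sum_{c\in\Sigma} r_c \;\leq\; \sum_{c\in\Sigma} n_c \;=\; \sum_{c\in\Sigma} \bigl|\{Q : c\in out(Q)\}\bigr| \;=\; \sum_Q |out(Q)| \;=\; \omega,
\]
which is the claim. I do not anticipate a substantial obstacle here: the argument is essentially a double-counting once one has the refinement $\equiv_< \,\subseteq\, \equiv^r_<$. The only mildly subtle point is the possible $c$-run break at position $n$, which sits at no class boundary, but this is automatically absorbed because we are already using the inequality $r_c \leq n_c$ (not equality). Whether the bound can be sharpened to a structural equality in special cases (e.g.\ characterising tries for which $r=\omega$) is an interesting question, but is not needed for the theorem.
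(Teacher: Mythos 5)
Your proof is correct and takes essentially the same approach as the paper: both arguments charge each XBWT run break to a distinct pair ($\equiv_<$-class, outgoing label), using that $\equiv_<$-classes are co-lexicographically convex with constant $out$-sets and that $\mathcal T/_{\equiv_<}$ is the minimum WDFA, so that $\omega=\sum_{Q}|out(Q)|$. The only difference is bookkeeping: the paper first bounds $r$ block-by-block via $\equiv^r_<$ (charging breaks to the last node of each RL-XBWT block) and then passes to $\equiv_<$ by refinement, whereas you count per character $c$ directly against the $\equiv_<$-classes whose $out$-set contains $c$; the two are the same double counting.
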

\begin{proof}
	Let $\mathcal T = (V,E)$.
	Consider any equivalence class $[\preord u]_{\equiv^r_<} = \{\preord u_i,\preord u_{i+1}, \dots, \preord u_{j}\}$. By definition of $\equiv^r_<$, all nodes in this class have the same children labels. It follows that the only run break in this class can be $\preord u_j$. This shows that $r \leq \sigma \cdot |V/_{\equiv^r_<}|$, because $r$ can increase by at most $\sigma$ only between two adjacent $\equiv^r_<$-classes. We can say more: between $[\preord u]_{\equiv^r_<}$ and the class immediately succeeding it in the ordering of the nodes, $r$ can increase at most by the number of children of $\preord u$ (since $\preord u_j$ can be a $c$-run only if $c$ is the label of a child of $\preord u_j$). It follows that $r$ can be upper-bounded as follows:
	$$
	r \leq \sum_{U \in V/_{\equiv^r_<}} |out(max(U))|
	$$
	where $max(U)$ returns the largest $\preord u\in U$ (by the ordering $<$). Now, since $\equiv_<$ is a refinement of $\equiv^r_<$ we have that 
	$$
	\sum_{U \in V/_{\equiv^r_<}} |out(max(U))| \leq \sum_{U \in V/_{\equiv_<}} |out(max(U))| = \omega
	$$
	from which the thesis follows. \qed
\end{proof}

An intriguing consequence of Theorem \ref{th:r<w} is that one can reduce the problem of indexing any acyclic Wheeler automaton $\mathcal A$ to the problem of indexing (the run-length XBWT of) the equivalent tree within $O(r)$ words of space: the resulting index will not be larger than $\mathcal A$. At a higher level, it is interesting to note that our technique collapses isomorphic subtrees that are adjacent in co-lexicographic order (see also Section \ref{sec:tree attr}). This is similar to the tunneling technique described by Alanko et al. \cite{tunneling} for Wheeler graphs. We conjecture that there is a deep link between our technique and theirs. 

\section{Tree Attractors}\label{sec:tree attr}
Let $S\in \Sigma^n$.
A string attractor \cite{attractors} is a set $\Gamma \subseteq [1,n]$ of the string's positions such that any substring $S[i,j]$ has at least one occurrence $S[i',j'] = S[i,j]$ 
such that
$\Gamma \cap [i',j'] \neq \emptyset$. String attractors generalize most known dictionary compressors (for example, the run-length BWT, Lempel-Ziv 77, and straight-line programs), in the sense that a compressed representation of size $\alpha$ can be turned into a string attractor of size $O(\alpha)$~\cite{attractors}. Conversely, most compressibility measures can be upper-bounded by $O(\gamma\cdot \mathrm{polylog}\ n)$, where $\gamma$ is the size of the smallest string attractor \cite{kempa2019resolution,attractors,LATIN18}. 

Since string attractors capture the repetitiveness of a string, it is natural to try to generalize them to trees. We now propose such a generalization and exhibit a tree attractor of size $r$.

\begin{definition}\label{def:tree attr}
Let $\mathcal T = (V,E)$. A tree attractor is a subset $\Gamma \subseteq E$ such that any subtree $\mathcal T(U)$, with $U\subseteq V$, has at least one isomorphic occurrence $\mathcal T(U') = (U', E')$ such that $\Gamma \cap E' \neq \emptyset$.
\end{definition}

Let $\mathcal T = (V,E)$.
We define  $\Gamma^r = \{ (\preord u_i,\preord v)\in E\ :\ \exists c\in\Sigma\ |\ i\ is\ a\ c\mathtt{-}run\ break\ and\ \lambda(\preord v) = c\}$. 
In Figure \ref{fig:example1}, the edges of $\Gamma^r$ are colored in red. We now show that $\Gamma^r$ is a tree attractor.

\begin{theorem}\label{th:Gamma^r is tree attr}
$\Gamma^r$ is  a tree attractor of size $|\Gamma^r| = r$.
\end{theorem}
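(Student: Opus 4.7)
The bound $|\Gamma^r|=r$ is immediate: each edge $(\preord u_i, child_c(\preord u_i))\in\Gamma^r$ corresponds to a unique pair $(i,c)$ where $i$ is a $c$-run break, so $|\Gamma^r|=\sum_{c\in\Sigma} r_c = r$.

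For the attractor property, my plan is as follows. Fix any subtree $\mathcal T(U)=(U,E)$ rooted at $\preord u$ with $|E|\geq 1$. Among all isomorphic occurrences of $\mathcal T(U)$ in $\mathcal T$ --- that is, injective, label-preserving maps $\phi: U \to V$ whose image induces a subtree isomorphic to $\mathcal T(U)$ --- pick one whose root image $\phi(\preord u)$ is maximum in the co-lexicographic order. I will argue by contradiction that this maximal occurrence must contain an edge of $\Gamma^r$. If it does not, then for every edge $(\preord x, \preord y)\in E$ with $c=\lambda(\preord y)$, the node $\phi(\preord x)$ is not a $c$-run break, so $c$ appears in $out$ of both $\phi(\preord x)$ and its co-lex successor, which I denote $\phi(\preord x)^+$.

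The crucial move is to shift the whole occurrence one step forward in co-lex order: define $\phi^+(\preord v) = \phi(\preord v)^+$ for every $\preord v \in U$, and show that $\phi^+$ is another valid isomorphic occurrence, with $\phi^+(\preord u) > \phi(\preord u)$. The main ingredient is Lemma~\ref{lem:prec}: applied to each edge $(\preord x,\preord y) \in E$ with label $c$, it yields $child_c(\phi(\preord x)) <_{pred} child_c(\phi^+(\preord x))$, and since $\phi(\preord y) = child_c(\phi(\preord x))$ this forces $\phi^+(\preord y) = child_c(\phi^+(\preord x))$. Hence $\phi^+$ preserves every edge of $E$ together with its label. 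Well-definedness follows because $|E|\geq 1$ implies $\phi(\preord u)$ has some outgoing label and is not a run-break, so $\phi(\preord u)^+$ exists; the successors of the remaining nodes exist by propagating Lemma~\ref{lem:prec} along the paths of $U$. Injectivity is inherited from the injectivity of the co-lex successor operation.

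The main obstacle I anticipate is verifying that the \emph{induced} subtree $\mathcal T(\phi^+(U))$ contains no edges outside $\phi^+(E)$, since any such extra edge would prevent $\phi^+$ from being a genuine isomorphic occurrence. I plan to handle this by a short case analysis on a hypothetical extra edge $(\preord a, \preord b)$ with $\preord a, \preord b \in \phi^+(U)$. Let $\preord y = (\phi^+)^{-1}(\preord b)$. If $\preord y \neq \preord u$, then the parent $\preord p$ of $\preord y$ in $U$ satisfies, via the edge-preservation above, $\phi^+(\preord p) = $ parent of $\preord b$ in $\mathcal T$; since parents in a tree are unique, $\phi^+(\preord p) = \preord a$, and by injectivity of $\phi^+$ we obtain $(\phi^+)^{-1}(\preord a) = \preord p$, so the edge is already in $\phi^+(E)$. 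If $\preord y = \preord u$, a short induction along the paths of $U$ (again using edge-preservation) shows that every element of $\phi^+(U)\setminus\{\phi^+(\preord u)\}$ is a proper descendant of $\phi^+(\preord u)$ in $\mathcal T$, so $\preord a$, being the $\mathcal T$-parent of $\phi^+(\preord u)$, cannot lie in $\phi^+(U)$, a contradiction. Hence $\phi^+$ is a strictly larger isomorphic occurrence, contradicting the maximality of $\phi$ and completing the proof.
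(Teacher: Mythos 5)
Your argument is correct and is essentially the paper's proof in an extremal wrapper: the paper iterates the same Lemma~\ref{lem:prec}-based shift from a $\Gamma^r$-free occurrence rooted at $\preord u_i$ to one rooted at $\preord u_{i+1}$ and appeals to finiteness of $\mathcal T$, while you choose a co-lex-maximal occurrence and derive a contradiction from a single shift. Your closing worry about extra induced edges can be dispatched more quickly by edge counting---$\phi^+(U)$ induces a forest (subgraph of a tree) containing the $|U|-1$ edges $\phi^+(E)$, hence coincides with it---but the extra care does no harm.
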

\begin{proof} 
	The fact that $|\Gamma^r| = r$ follows from the very definitions of $\Gamma^r$ and $r$. 
	Let $\mathcal T(U) = (U,E'')$, with $U\subseteq V$, be a subtree of $\mathcal T$. If $E'' \cap \Gamma^r \neq \emptyset$ then we obtain our claim.
	Similarly, if the root of $\mathcal T(U)$ is $\preord u_n$ (the last node in the co-lexicographic order of the nodes of $\mathcal T$) then $n$ is a run break and all edges leaving $\preord u_n$ are in $\Gamma^r$. It follows that $E'' \cap \Gamma^r \neq \emptyset$ holds and we are done. 
	
	Let us therefore assume that $E'' \cap \Gamma^r = \emptyset$ and that the root of $\mathcal T(U)$ is $\preord u_i$, with $i<n$. Since no edge from $E''$ leaving $\preord u_i$ belongs to $\Gamma^r$, we have that $c\in out(\preord u_i) \Rightarrow c \in out(\preord u_{i+1})$. But then, since 
	$\preord u_i <_{pred} \preord u_{i+1}$ by Lemma \ref{lem:prec} it must 
	be the case that $child_c(\preord u_i) <_{pred} child_c(\preord u_{i+1})$ for all $c = \lambda(\preord v)$, where $(\preord u_i,v)\in E''$: 
	the children of $\preord u_i$ and $\preord u_{i+1}$ reached by following label $c$ must be adjacent in the co-lexicographic order of the tree. It is clear that we can repeat the above 
	reasoning to each such node $\preord v = child_c(\preord u_i)$ since, by assumption, no edge from $E''$ leaving $\preord v$ belongs to $\Gamma^r$. This procedure can be repeated until we visit the whole $\mathcal T(U)$. As a consequence, we obtain that $\mathcal T(U)$ has an isomorphic occurrence $\mathcal T(U') = (U', E')$ with root $\preord u_{i+1}$ in $\mathcal T$. If $E' \cap \Gamma^r \neq \emptyset$, we are done. Otherwise, we can repeat the whole reasoning to $\mathcal T(U')$, finding another isomorphic occurrence (rooted in $\preord u_{i+2}$). Note that the roots of this sequence of isomorphic trees are $\preord u_i <_{pred} \preord u_{i+1} <_{pred} \preord u_{i+2}, \dots$. By the finiteness of $\mathcal T$ and by the totality of $<$, this sequence cannot be infinite, therefore at some point we must stop finding a subtree $(\bar U, \bar E) = \mathcal T(\bar U) \approx \mathcal T(U)$ such that $\bar E \cap \Gamma^r \neq \emptyset$. \qed 
\end{proof}

Let $\gamma$ be the size of the smallest tree attractor and $\omega$ be the number of edges of the smallest Wheeler DFA equivalent to $\mathcal T$. By Theorems \ref{th:r<w} and \ref{th:Gamma^r is tree attr} we obtain  $\gamma  \leq r \leq \omega$.

\section{Proof of Lemma \ref{lem:cr}}

\paragraph{\textbf{Claim}.} 
There is a data structure taking $O(r\log n) + o(n)$ bits of space and supporting operation $cr(\wheel u,c)$ in $O(\log\sigma)$ time. 

\paragraph{\textbf{Proof}.} 
Consider our RL-XBWT representation of Definition \ref{def:RL-BWT}: $(ADD_q,DEL_q,\ell_q)_{q=1,\dots, \nblocks}$.
We mark in an entropy-compressed bitvector supporting constant-time rank and select queries~\cite{RRR} all nodes (in co-lexicographic order) that are the first in their XBWT block. 
Since the total number of XBWT blocks is $r' \leq 3r$ (Lemma \ref{lem:RL-BWT}), the bitvector takes $O(r\log n) + o(n)$ bits of space~\cite{RRR}. 

Let $\Sigma = \{c_1, \dots, c_\sigma\}$ be the original alphabet. We define a new alphabet $\Sigma' = \{/\} \cup \{c^-,c^+\ :\ c\in\Sigma\}$. Characters of $\Sigma'$ are sorted as follows: $c_1^- \prec  c_2^- \prec  \dots \prec  c_\sigma^- \prec  c_1^+ \prec  c_2^+ \prec  \dots \prec  c_\sigma^+ \prec  /$, i.e. all characters of the form $c^-$ come before those of the form $c^+$ and $/$ is the largest character. 
We build a sequence $S'$ over $\Sigma'$ by concatenating the characters of all sets $ADD_i$ and $DEL_i$ of our RL-XBWT, separating each block with a special symbol '/' as follows  ($\bigodot$ is the concatenation operator between strings and the concatenation order from each set is lexicographic):
$$
S' = \bigodot_{i=1}^{r'} \left( \left( \bigodot_{c\in ADD_i} c^+ \right) \bigodot \left( \bigodot_{c\in DEL_i} c^- \right) \bigodot / \right)
$$
Figure \ref{fig:S'} shows a running example.

\begin{figure}[h!]
	\centering
	\begin{tabular}{rcc}
		$S'$ & = & $a^+b^+c^+/a^-c^-/b^-/a^+c^+/a^-c^-/b^+c^+/b^-c^-/a^+/$
	\end{tabular}\caption{Sequence $S'$ obtained from the example of Figures \ref{fig:example1} and \ref{fig:example}.}\label{fig:S'}
\end{figure}

Clearly, $S'$ has $O(r)$ characters over an alphabet of size $\sigma' \in O(\sigma)$. We build over $S'$ 
a wavelet tree~\cite{WT}, taking $O(|S'|\log\sigma') \subseteq O(r\log n)$ bits of space and supporting rank and select operations in $O(\log\sigma)$ time.



First, note that by definition all nodes within the same RL-XBWT block have the same answers to operation $cr(\wheel u,c)$. 
With a constant-time predecessor on the bitvector marking the first nodes in each XBWT block,
we can therefore reduce $cr(\wheel u,c)$ to the analogous operation $cr'(i,c)$ on blocks, where this time $1 \leq i \leq \nblocks$ is the index of the RL-XBWT block the node $\wheel u$ belongs to and $cr'(i,c)$ is the answer to $cr(\wheel v,c)$ for any node $\wheel v$ in the $i$-th block. Now, let $i$ be a block number, and $j = S'.select_{/}(i)$ be the position in $S'$ containing the $i$-th occurrence of $/$. 
Note that in sequence $S'$ two consecutive occurrences of $c^+$ must be interleaved by exactly one occurrence of $c^-$ (in any position between those two $c^+$s).
Then, it is easy to see that the following holds: 

\begin{lemma}\label{lem:cr1}
	$S'.rank_{c^+}(j) - S'.rank_{c^-}(j)$ is equal to the number of edges labeled $c$ exiting any node in the $i$-th block (in particular, it is always either 0 or 1).
\end{lemma}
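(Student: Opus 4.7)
The plan is to prove the claim by induction on $i$, using the following invariant: for every character $c \in \Sigma$ and every block index $1 \le i \le r'$, the value $S'.rank_{c^+}(j_i) - S'.rank_{c^-}(j_i)$ (where $j_i = S'.select_{/}(i)$) equals $1$ if $c \in out(\preord u)$ for any $\preord u$ in the $i$-th block, and $0$ otherwise. Since in a trie each outgoing label occurs at most once per node, establishing this invariant immediately yields both halves of the lemma.

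For the base case $i=1$, recall that Definition \ref{def:RL-BWT} sets $out(\preord u_0) = \emptyset$, so $DEL_1 = \emptyset$ and $ADD_1 = out(\preord u_{i_1})$. Hence in the prefix of $S'$ ending at $j_1$ we have exactly one occurrence of $c^+$ for each $c \in out(\preord u_{i_1})$ and no occurrences of any $c^-$; the invariant holds.

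For the inductive step, assume the invariant at $j_{i-1}$. Between $j_{i-1}$ and $j_i$ the sequence $S'$ contains, by construction, one copy of $c^+$ for each $c \in ADD_i$ followed by one copy of $c^-$ for each $c \in DEL_i$, and nothing else. By the definition of $ADD_i$ and $DEL_i$ (and the fact that $ADD_i \cap DEL_i = \emptyset$), adding $ADD_i$ and removing $DEL_i$ from $out$ of block $i-1$ produces exactly $out$ of block $i$. A four-way case split on whether $c$ lies in $out$ of block $i-1$ and/or block $i$ shows that the increment of $S'.rank_{c^+} - S'.rank_{c^-}$ across this segment is precisely the signed change $|out(i)\cap\{c\}| - |out(i-1)\cap\{c\}| \in \{-1,0,+1\}$, so the invariant is preserved. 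In particular the running difference never leaves $\{0,1\}$, which justifies the parenthetical remark.

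I do not expect a real obstacle here: the proof is pure bookkeeping once the invariant is stated. The only subtlety is to align the definitions carefully, in particular the convention $out(\preord u_0) = \emptyset$ that makes the base case work and the disjointness $ADD_i \cap DEL_i = \emptyset$ that prevents a character from being simultaneously added and removed inside one block. Both are explicit in Definition \ref{def:RL-BWT}, so the induction goes through cleanly.
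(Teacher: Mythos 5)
Your proof is correct and matches the paper's (implicit) argument: the paper merely observes that occurrences of $c^+$ and $c^-$ in $S'$ alternate and declares the claim ``easy to see,'' and your induction on the block index with the invariant that $S'.rank_{c^+}(j_i) - S'.rank_{c^-}(j_i)$ equals the indicator of $c \in out$ of the $i$-th block is exactly the bookkeeping that verification amounts to. The two subtleties you flag ($out(\preord u_0)=\emptyset$ and $ADD_i \cap DEL_i = \emptyset$) are indeed the only points where the definitions must be consulted, so nothing is missing.
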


Let $S'.rank_{a,b}(i) = \sum_{a \leq d\leq b} S'.rank_{d}(i)$ be the number of characters belonging to the lexicographic range $[a,b]$ in $S'[1,i]$. 
This query is also known as \emph{three-sided range counting}.
A direct consequence of Lemma \ref{lem:cr1} is the following: 

\begin{corollary}\label{cor:rank<c}
	$S'.rank_{c_1^+,c^+}(j) - S'.rank_{c_1^-,c^-}(j)$ is equal to the number of edges labeled with all characters smaller than or equal to $c\in\Sigma$ exiting any node in the $i$-th block.
\end{corollary}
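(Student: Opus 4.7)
The plan is to derive the corollary directly from Lemma \ref{lem:cr1} by summing over all characters $d \in \Sigma$ with $d \preceq c$. First I would unfold the range-rank operation using its definition: by the ordering prescribed on $\Sigma'$, the characters $\{d^+ : d \preceq c\}$ form precisely the lexicographic interval $[c_1^+,c^+]$, and the characters $\{d^- : d \preceq c\}$ form the interval $[c_1^-,c^-]$. Hence, by linearity,
\[
S'.rank_{c_1^+,c^+}(j) - S'.rank_{c_1^-,c^-}(j) = \sum_{d \preceq c} \bigl( S'.rank_{d^+}(j) - S'.rank_{d^-}(j) \bigr).
\]

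Next, I would invoke Lemma \ref{lem:cr1} term by term. Fix any node $\wheel v$ belonging to the $i$-th RL-XBWT block. For each character $d \in \Sigma$, the lemma tells us that $S'.rank_{d^+}(j) - S'.rank_{d^-}(j)$ equals $1$ if there is an outgoing edge labeled $d$ leaving $\wheel v$, and $0$ otherwise. Summing the indicator over all $d \preceq c$ yields $|\{ d \in out(\wheel v) : d \preceq c \}|$, which is exactly the number of outgoing edges of $\wheel v$ labeled with a character smaller than or equal to $c$. Since the right-hand side is independent of the particular choice of $\wheel v$ within the $i$-th block (by Lemma \ref{lem:cr1}, every node in the block has the same value), the corollary follows.

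The only subtlety, and essentially the only thing to verify, is that the alphabet $\Sigma'$ has been ordered so that $\{d^+ : d \preceq c\} = [c_1^+, c^+]$ and $\{d^- : d \preceq c\} = [c_1^-, c^-]$; this is immediate from the ordering specified in the definition of $\Sigma'$ (all minus-characters come before all plus-characters, and within each group the order follows that of $\Sigma$). No new obstacle arises: the argument is simply a telescoping of Lemma \ref{lem:cr1} across the characters $\preceq c$.
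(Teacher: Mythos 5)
Your proposal is correct and matches the paper's intended argument: the paper presents the corollary as a direct consequence of Lemma \ref{lem:cr1}, obtained exactly as you do by decomposing the three-sided range counts over the contiguous intervals $[c_1^+,c^+]$ and $[c_1^-,c^-]$ of $\Sigma'$ and summing the per-character indicators of Lemma \ref{lem:cr1}. No gap; nothing further is needed.
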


Wavelet trees support also query $S'.rank_{a,b}(i)$ in $O(\log \sigma)$ time~\cite{WT}. Corollary \ref{cor:rank<c} solves precisely query $cr'(i,c)$, so we obtain our claim. \qed

\vspace{10pt}

Even if we will not need it in our index, we note that binary search on Corollary \ref{cor:rank<c} can be used to solve also the following operation in $O(\log^2\sigma)$ time: 
\texttt{child label} $cl(\wheel u,k)$, which returns the label of the edge connecting $\wheel u$ with its $k$-th (in lexicographic order) child. This operation could be useful, for example, to list the (labels of the) children of any $\wheel u$ within $O(r\log n) + o(n)$ bits of space.

\section{Proof of Lemma \ref{lemma:isd}}

\paragraph{\textbf{Claim}.} 
	The tree representation~\cite{suctrees} supports also operation $ISD(\preord u, \preord v,\preord u')$ in $O(1)$ time at no additional space usage. 

\paragraph{\textbf{Proof}.} 
	The tree representation~\cite{suctrees} stores the Balanced Parentheses Sequence (BPS) representation of the tree topology, augmented with additional (light) structures.
	Let $i_{\preord u}$, $i_{\preord v}$, $i_{\preord u'}$ be the positions of the open parentheses corresponding to nodes $\preord u,\preord v$, and $\preord u'$ in the BPS representation of the tree topology.
	Since $\preord u\approx \preord u'$ the parentheses substring representing $\preord u$ and its descendants is equal to the one representing $\preord u'$ and its descendants. Then, it must be the case that $i_{\preord v} - i_{\preord u} = i_{\preord v'} - i_{\preord u'}$, therefore $i_{\preord v'} = i_{\preord v} - i_{\preord u} + i_{\preord u'}$. 	The representation~\cite{suctrees} allows moving between positions in the BPS sequence and pre-order ranks in constant time, so our thesis follows. \qed

\section{Proof of Lemma \ref{lemma:isc}}

\paragraph{\textbf{Claim}.} 
	Operation $ISC(\preord u_i, k)$ can be supported in $O(1)$ time and $O(r\log n) + o(n)$ bits of space.

\paragraph{\textbf{Proof}.} 
	For brevity, let $OUT_k = out(\preord u_k)$.
	For each node $\preord u_i$ such that $i<n$ is a run break (i.e. $out(\preord u_i) \neq out(\preord u_{i+1})$), we build the following two bitvectors: 
	$$
	S_i^1 = \bigodot_{c\in OUT_{i}} c \stackrel{?}{\in} OUT_{i+1} 
	$$ 
	and 
	$$
	S_i^2 = \bigodot_{c\in OUT_{i+1}} c \stackrel{?}{\in} OUT_{i} 
	$$ 
	where the operator $\bigodot$ visits characters in lexicographic order and where $c \stackrel{?}{\in} A$ equals the symbol '1' if $c\in A$ and '0' otherwise. In other words, $S_i^1$ marks an outgoing label of  $\preord u_i$ with a bit 1 if it is also an outgoing label of $\preord u_{i+1}$ and with a bit 0 otherwise (similar for $S_i^2$).
	We concatenate these two bit-sequences and further concatenate all such $S_i^1S_i^2$ in pre-order (that is, according to the pre-order number $\preord u_i$, rather than on $i$) in a single sequence $S$ of length $|S| \leq n$. 
	We furthermore use a bitvector $B_1$ of length $n$ to mark \emph{in pre-order} the nodes that are run-breaks (i.e. nodes $\preord u_i$ for which we built $S_i^1S_i^2$), and a bitvector $B_2$ of length $|B_2| = |S| \leq n$ to mark the boundaries of each $S_i^1$ and $S_i^2$ inside sequence $S$. We build on the two bitvectors the entropy-compressed representation of Raman et al.~\cite{RRR}, which answers \emph{rank} and \emph{select} queries in constant time. Since those bitvectors have length at most $n$ and have $O(r)$ bits set, the structure~\cite{RRR} uses $o(n) + O(r\log n)$ bits~\cite{RRR}. Using $S$, $B_1$, and $B_2$, we can retrieve in constant time 
	the (boundaries in $S$ of the) two sequences $S_i^1$ and $S_i^2$ associated with any pre-order node $\preord u_i$ that is a run-break.
	We use Raman et al.'s representation~\cite{RRR} to represent also sequence $S$. Note that $S$ has one bit equal to 0 for each $c\in OUT_i$ such that $c\notin OUT_{i+1}$ and for each $c\in OUT_{i+1}$ such that $c\notin OUT_{i}$. It follows that $S$ has at most $O(r)$ bits equal to 0, therefore the entropy-compressed data structure~\cite{RRR} uses $o(n) + O(r\log n)$ bits to represent it. 
	
	We now show how to answer $ISC(\preord u_i, k)$.
	Let $c$ be the $k$-th (in lexicographic order) outgoing label of $\preord u_i$.
	We first retrieve in constant time 
	(the boundaries in $S$ of) $S_i^1$ and $S_i^2$. 
	Note that we can assume $S_i^1[k] = 1$ since, by assumption in our query definition, $c$ is an outgoing label of $\preord u_{i_+1}$. Let $S_i^1[k]$ be the $j$-th bit equal to '1' in $S_i^1$ (we can find $j$ in constant time with a \emph{rank} query). Then, it must be the case that the $j$-th bit equal to '1' $S_i^2[t]$ is such that the $t$-th outgoing label of $\preord u_{i+1}$ is equal to $c$ (note: by the way we constructed those two sequences, the corresponding bits set in $S_i^1$ and $S_i^2$ correspond to the same labels). We can find $t$ in constant time with a select operation on $S_i^2$. 
	Finally, we return $t$.
	\qed

\section{Proof of Lemma \ref{lem:toehold}}

\paragraph{\textbf{Claim}.} 
There is a data structure taking $O(r\log n) + o(n)$ bits of space on top of the succinct tree topology of Navarro and Sadakane~\cite{suctrees} that, 
given a pattern $P\in \Sigma^m$, returns the co-lexicographic range $[\wheel \ell,\wheel r]$ of nodes reached by a path labeled $P$, as well as $\preord u_{\wheel\ell}$, in $O\left(m\log\sigma\right)$ time.

\paragraph{\textbf{Proof}.} 
	Finding the range of nodes $[\wheel \ell, \wheel r]$ reached by a pattern requires, as building block, being able to count the number of occurrences of a character $c$ in a prefix $out(\preord u_1), \dots, out(\preord  u_i)$ of the XBWT, an operation we denote as $rank_c(i)$~\cite{XBWT}. 
	Moreover, in order to find node $\preord u_{\wheel \ell}$ we will find, given an index $i$ and a character $c\in\Sigma$, the minimum $i'\geq i$ such that $c\in out(\preord u_{i'})$. 
	Such $i'$ will always exist in our application below. 
	We denote this operation as $successor_c(i) = i'$. We now show how to solve these operations. 
	
	\paragraph{\textbf{rank}.}
	We first show how to support $rank_c(i)$ in  $O(r\log n) + o(n)$ bits of space and $O(\log\sigma)$ time (it is actually possible to improve upon this running time, but for us  $O(\log\sigma)$ will be sufficient due to the complexity of operation $cr(\wheel u,c)$, Lemma \ref{lem:cr}).
	
	We use the same sequence $S'$ defined in the proof of Lemma \ref{lem:cr}, represented with a wavelet tree, as well as the entropy-compressed bitvector marking nodes (in co-lexicographic order) that are the first in their XBWT block.
	
	Consider any occurrence of a character $c^+$ in $S'$, with $c\in\Sigma$, belonging to XBWT block $j$ (that is, between the $(j-1)$-th and $j$-th occurrence of $/$), and let $\preord u_{j'}$ be the first node in the $j$-th XBWT block. We explicitly store $rank_c(j'-1)$ in correspondence to this occurrence of $c^+$ (if $j'=1$, then we take $rank_c(j'-1) = 0$). Storing all these partial ranks takes $O(r)$ words of space in total. 
	
	Now, it is not hard to see that all these structures allow us to compute $rank_c(i')$ in $O(\log \sigma)$ time for any $c\in\Sigma$ and $1\leq i' \leq n$. First, we find the XBWT block $i$ containing node $\preord u_{i'}$ (constant time on the bitvector marking the first node of each block). Then, we find in $S'$ the occurrences of $c^-$ and of $c^+$ that immediately precede the $i$-th symbol '/' ($O(\log\sigma)$ time using rank and select operations). 
	If there are no such occurrences of $c^+$, then $rank_c(i')=0$.	
	We consider two other cases. 
	
	(A) The occurrence found of $c^+$ is to the right of that of $c^-$, or there are no such occurrences of $c^-$. 
	Let $j^+$ be the the XBWT block containing such occurrence of $c^+$.
	This means that all nodes contained in the XBWT blocks from the $j^+$-th to the $i$-th (included) have an outgoing edge labeled $c$.
	Let $\preord u_{j'}$ be the first node of the $j^+$-th XBWT block (found in constant time using our bitvector).
	Then, $rank_c(j'-1)$ is explicitly stored and we obtain $rank_c(i') = rank_c(j'-1) + (i'-j') +1$. 
	
	(B) The other case to be considered is the one where the occurrence found of $c^-$ is to the right of that of $c^+$. Let $j^-$ be the XBWT block containing such occurrence of $c^-$. Then, all XBWT blocks from the $j^-$-th to the $i$-th (included) do not have an outgoing edge labeled with $c$.
	Note that $j>1$, since there is an occurrence of $c^+$ before the $j^-$-th block. 
	Then, nodes in the $(j^--1)$-th block do have an outgoing edge labeled $c$. 
	Let $\preord u_{j'}$ be the first node in the $j^-$-th block (found in constant time using our bitvector). Then, $rank_c(i') = rank_c(j'-1)$, which reduces to case (A).
	
	\paragraph{\textbf{successor}.} We  show how to solve $successor_c(i')$. If $rank_c(i') = rank_c(i'-1)+1$ (where $rank_c(0)=0$), then 
	node $\preord u_{i'}$ has an outgoing edge labeled $c$ and
	we return $i'$.
	Otherwise, we find the next XBWT block containing nodes that have an outgoing edge labeled $c$. Let $i$ be the XBWT block containing node $\preord u_{i'}$ (found in constant time using our bitvector). Using one rank and one select operation on $S'$, we find the occurrence of $c^+$ that immediately follows the $i$-th occurrence of $/$ in $S'$ (in the application below, such an occurrence of $c^+$ will always exist). Let $j$ be the XBWT block containing this occurrence of $c^+$ (found with a rank operation on $S'$ to count the number of '/' preceding the occurrence of $c^+$ and adding 1 to the result). Let moreover $\preord u_{j'}$ be the first node in the $j$-th XBWT block (found in constant time using our bitvector). We return $j'$.

	\paragraph{\textbf{computing $[\wheel \ell, \wheel r]$}.} 
	Recall that $\wheel \ell$ is the co-lexicographic rank (in the list of sorted nodes) of the first pattern occurrence. Similarly, $\wheel r$ is the rank of the last such node. 
	We show how to find the co-lexicographic range $[\wheel \ell, \wheel r]$ of all nodes reached by a given pattern. 
	The algorithm (known as \emph{backward search}) is based on the observation that labels occur in the same order in the XBWT and in the sequence $\lambda(\preord u_1), \dots, \lambda(\preord u_n)$~\cite{XBWT,GAGIE201767}, see Figure 1. 
	Moreover, the nodes reached by a path labeled $P \in \Sigma^*$ always form a consecutive range with respect to the co-lexicographic order~\cite{GAGIE201767}. These observations lead to the following algorithm, first described in~\cite{XBWT} (on trees). 
	First, note that characters in $\lambda(\preord u_1), \dots, \lambda(\preord u_n)$ are sorted (i.e. clustered in increasing order). We store in an array $C$ a total of $\sigma \leq r$ integers recording the starting point of every distinct character in this sequence. 
	At this point, given the co-lexicographic range $[\wheel \ell,\wheel r]$ of nodes reached by a path labeled $P\in\Sigma^*$, to extend it with character $c\in\Sigma$ we map the characters equal to $c$ contained in $out(\preord u_{\wheel \ell}), \dots, out(\preord u_{\wheel r})$ to the corresponding range $\lambda(\preord u_{\wheel \ell'}), \dots, \lambda(\preord u_{\wheel r'})$ using just two \emph{rank} queries and one access to array $C$. The result $[\wheel \ell', \wheel r']$ is the range of nodes reached by a path labeled $P\cdot c$. At the beginning, the algorithm starts with $P=\epsilon$ (empty pattern) and $[\wheel \ell,\wheel r] = [1,n]$.
	Crucially, note that this procedure returns only the range of \emph{ranks} (in co-lexicographic order)  $[\wheel \ell,\wheel r]$ of the nodes reached by a path labeled $P$. To obtain their pre-order identifiers $\preord u_{\wheel \ell}, \dots, \preord u_{\wheel r}$ we will need the more complex \emph{locate} queries, discussed in Section \ref{sec:locate}.
	
	\paragraph{\textbf{computing $\preord u_{\wheel\ell}$}.}
	We  show how to extend the above procedure in order to also compute $\preord u_{\wheel\ell}$.
	At the beginning, we start with an empty pattern $P=\epsilon$ and its range $[1,n]$. Then, $\preord u_1 = 1$ is the root. 
	Assume now that we have computed the range $[\wheel \ell, \wheel r]$ of a pattern $P$, and that we know the pre-order node $\preord u_{\wheel \ell}$. 
	We extend $P$ with letter $c$ and obtain the range  $[\wheel \ell', \wheel r']$ of $P\cdot c$ with an extension step described above (assume that the range is not empty, otherwise the search stops). Then, we find in $O(\log\sigma)$ time with a successor query (read above) the smallest $i$ in the range  $[\wheel \ell, \wheel r]$ such that $c\in out(\preord u_i)$.	 
	Note that such a successor always exists, since we assume that $[\wheel \ell', \wheel r']$ is not empty.
	If $i=\wheel \ell$, then we simply descend to the corresponding child of $\preord u_{\wheel \ell}$ with $\preord u_{\wheel \ell'} = cbr(\preord u_{\wheel \ell},cr(\wheel \ell, c))$ in  $O\left(\log\sigma\right)$ time and $2n+o(n)+O(r\log n)$ bits of space (by Operations \ref{op:cbr} and \ref{op:cr}). Otherwise, $i>\wheel \ell$. But then, co-lex node $i$ is the first in a run of nodes having an outgoing edge labeled $c$ (that is, co-lex node $i-1$ does not have an outgoing edge labeled $c$). 
	We can therefore explicitly store all those pre-order nodes, since there are at most $O(r)$ of them, and retrieve $\preord u_i$ in constant time. Finally, we  descend to the edge labeled $c$ of $\preord u_{i}$ with $\preord u_{\wheel \ell'} = cbr(\preord u_{i},cr(i, c))$ in  $O\left(\log\sigma\right)$ time and $2n+o(n)+O(r\log n)$ bits of space (by Operations \ref{op:cbr} and \ref{op:cr}). \qed

\section{Proof of Lemma \ref{lem:find colored}}

\paragraph{\textbf{Claim}.} 
 There is a data structure taking $O(r\log n) + o(n)$ bits of space on top of the succinct tree topology of Navarro and Sadakane~\cite{suctrees} and answering the following queries in $O(1)$ time. Given a pre-order node $\preord u_i$ with $i<n$:
\begin{itemize}
	\item[(a)] 
	If $\preord u_i$ is not colored, find a colored node $\preord u_j \neq \preord u_i$ in the complete subtree rooted in $\preord u_i$ such that no node on the path from $\preord u_i$ to $\preord u_j$ is colored (except $\preord u_j$), or report that $\preord u_j$ does not exist.
	\item[(b)] Find the lowest ancestor $\preord u_j$ of $\preord u_i$ such that the complete subtree rooted in $\preord u_j$ contains a colored node. Note that such a node always exists, since the root is always blue. 
\end{itemize}

\paragraph{\textbf{Proof}.} 
	Consider the Balanced Parentheses Sequence (BPS) representation of the tree. To answer (a), it is sufficient to mark in a bitvector $B$ all open parentheses corresponding to a colored node (note: we mark $O(r)$ parentheses). By the definition of BPS, $I = \preord u_i$ corresponds to the $I$-th open parenthesis in the sequence. 
	If the $I$-th open parenthesis is marked, then we return $\preord u_i$. otherwise, let $J$ be the position of the marked open parenthesis immediately following the $I$-th. If the position of $J$ falls inside the BPS range of node $\preord u_i$ (that is, between its corresponding open and close parentheses), then $J = \preord u_j$ is the descendant of $\preord u_i$ that we are looking for. Otherwise, the complete subtree rooted in $\preord u_i$ does not contain colored nodes and we report that $\preord u_j$ does not exist. By using Raman et al.'s entropy-compressed representation~\cite{RRR}, bitvector $B$  takes $O(r\log n) + o(n)$ bits and answers successor queries in constant time. All operations on the BPS representation (in particular, finding matching pairs of open/close parentheses)  take constant time~\cite{suctrees}.
	
	We now show how to answer (b). 
	Consider again the $I$-th open parenthesis, i.e. $I = \preord u_i$. The idea is to find the $K$-th open parenthesis that immediately precedes the $I$-th and that is also marked. Let $\preord u_k = K$, and let $\preord u_t = LCA(\preord u_i, \preord u_k)$. Then, if another node $\preord u_{t'} \neq \preord u_t$ on the path $\preord u_t \rightsquigarrow \preord u_i$ is such that the complete subtree rooted in $\preord u_{t'}$ contains a colored node $\preord u_{k'}$, it must be the case that $\preord u_{k'}$ appears after $\preord u_{i}$ in pre-order (otherwise we would have found the rightmost such node in place of $\preord u_k$). To complete the procedure we must therefore also find the $K'$-th open parenthesis that immediately succeeds the closing parenthesis of $\preord u_i$ and that is also marked. Let $\preord u_{k'} = K'$, and let $\preord u_{t'} = LCA(\preord u_i, \preord u_{k'})$. The answer to our query is the deepest node between $\preord u_{t}$ and $\preord u_{t'}$ (this requires computing $depth(\preord u_{t})$ and $depth(\preord u_{t'})$). Note that all operations take constant time and that we use the same structures defined for query (a). 
	Again, all operations on the BPS representation (in particular: matching parentheses, LCA, $depth$)  take constant time~\cite{suctrees}.
	\qed

\section{Proof of Lemma \ref{lem:uj red}}\label{app:uj red}

\paragraph{\textbf{Claim}.} 
In Lemma \ref{lem:find colored} (a), if $\preord u_j$ exists then $\preord u_j$ must be red and not blue.

\paragraph{\textbf{Proof}.} 

Assume that the complete subtree rooted in $\preord u_i$ contains a colored node $\preord u_j \neq \preord u_i$ such that no node other than $\preord u_j$ on the path $\preord u_i \rightsquigarrow \preord u_j$ of length (number of nodes) $k\geq 2$ is colored. We are going to prove that $\preord u_j$ is red. By assumption in Lemma \ref{lem:find colored} (a), we have $i<n$. We prove the property inductively on the length $k$ of the path. Assume $k=2$ (that is, $\preord u_j$ is child of $\preord u_i$), and let $c = \lambda(\preord u_j)$.
Since by assumption $\preord u_i$ is not red and $i<n$, then $c\in out(\preord u_{i+1})$. By Lemma \ref{lem:prec} we have that $\preord u_{j+1} = child_c(\preord u_{i+1})$. But then, $c = \lambda(\preord u_{j+1}) = \lambda(\preord u_{j})$, therefore $\preord u_{j}$ cannot be blue. Since $\preord u_{j}$ is colored,  it must be the case that $\preord u_{j}$ is red (and not blue).

Let $k>2$, and let $\preord u_i \rightarrow \preord u_{i'} \rightsquigarrow \preord u_j$ be the path from $\preord u_i$ to $\preord u_j$, where $\preord u_{i'}$ is child of $\preord u_{i}$. 
Let $c = \lambda(\preord u_{i'})$.
Since $i<n$ and $\preord u_i$ is not red, we conclude that $c\in out(\preord u_{i+1})$. By Lemma \ref{lem:prec} we have that $\preord u_{i'+1} = child_c(\preord u_{i+1})$. Then, this implies that $i'<n$ therefore we can apply our inductive hypothesis to the path $\preord u_{i'} \rightsquigarrow \preord u_j$ of length $k-1$ and conclude that $\preord u_{j}$ is red and not blue. \qed

\section{Proof of Lemma \ref{lem:adj paths}}\label{app:cor:adj paths}

\paragraph{\textbf{Claim}.} 

Let $\Pi = \preord u_{i_1} \rightarrow \preord u_{i_2} \rightsquigarrow \preord u_{i_k}$, with $i_j<n$ for some $1\leq j \leq k$, be a path of length $k$ without blue nodes other than (possibly) $\preord u_{i_1}$ and without red nodes other than (possibly) $\preord u_{i_k}$. Then, $\preord u_{i_1+1} \rightarrow \preord u_{i_2+1} \rightsquigarrow \preord u_{i_k+1}$ is a path in the tree (adjacent to $\Pi$). 

\paragraph{\textbf{Proof}.} 

Let us break the path into two subpaths, overlapping by node $\preord u_{i_j}$: $\Pi' = \preord u_{i_1} \rightsquigarrow \preord u_{i_j}$ and $\Pi'' = \preord u_{i_j} \rightsquigarrow \preord u_{i_k}$. Intuitively, we break the proof for the two sub-paths since the proof for $\Pi'$ will use the absence of blue nodes (induction moves towards the root), while the proof for $\Pi''$ will use the absence of red nodes (induction moves towards the leaves).

Note that the following properties hold on the two individual subpaths: (1) in both $\Pi'$ and $\Pi''$, only the first node might be blue and only the last node might be red. (2) the last node $\preord u_{z'}$ of $\Pi'$ is such that $z'<n$, and the first node $\preord u_{z''}$ of $\Pi''$ is such that $z''<n$.
Note also that $i_j$ might coincide with $i_1$, $i_k$, or both. In this case, one of the two subpaths (or both) reduces to a single node. 
We prove the lemma separately for these two subpaths.

(Subpath $\Pi'$) We prove the property by induction on the number $t$ of nodes in the subpath. If $t=1$ the claim is immediate, since by assumption the only node $\preord u_{z'}$ in the subpath is such that $z'<n$, thus $\preord u_{z'+1}$ exists. 

Let therefore $\Pi' = \preord u_{j_1} \rightsquigarrow \preord u_{j_{t-1}} \rightarrow \preord u_{j_{t}}$ have length $t\geq 2$. By assumption, $j_{t}<n$ and no node other than (possibly) $\preord u_{j_{1}}$ is blue: it follows that $\lambda(\preord u_{j_{t}}) = \lambda(\preord u_{j_{t}+1})$. Consider the parents of these two nodes, $\pi(\preord u_{j_{t}}) = \preord u_{j_{t-1}}$ and $\pi(\preord u_{j_{t}+1}) = \preord v$. 
Since $\lambda(\preord u_{j_{t}}) = \lambda(\preord u_{j_{t}+1})$, 
by co-lexicographic Axiom (ii) it must be the case that $\preord u_{j_{t-1}} < \preord v$. We can say more: since by assumption $\preord u_{j_{t-1}}$ is not red, it must be the case that $\preord u_{j_{t-1}} <_{pred} \preord v$, i.e. that $\preord v = \preord u_{j_{t-1}+1}$. Assume, for contradiction, that there exists a node $\preord w$ such that $\preord u_{j_{t-1}} <_{pred} \preord w < \preord v$. Let $c = \lambda(\preord u_{j_{t}}) = \lambda(\preord u_{j_{t}+1})$.  We have two cases. If $c\in out(\preord w)$, then by co-lexicographic Axiom (ii) it must be the case that $\preord u_{j_{t}} <_{pred} child_c(\preord w) < \preord u_{j_{t}+1}$, a contradiction. If $c\notin out(\preord w)$, then $out(\preord u_{j_{t-1}}) \neq out(\preord w)$, therefore $\preord u_{j_{t-1}}$ is red: also a contradiction. We obtained that $\preord u_{j_{t-1}+1} \rightarrow \preord u_{j_{t}+1}$ is an edge in the tree and, in particular, $j_{t-1}<n$. We can therefore apply the inductive hypothesis to the subpath $\preord u_{j_1} \rightsquigarrow \preord u_{j_{t-1}}$ of length $t-1$ and obtain that $\preord u_{j_1+1} \rightsquigarrow \preord u_{j_{t-1}+1}$ is a path in the tree. Merging these two results, we obtain that $\preord u_{j_1+1} \rightsquigarrow \preord u_{j_{t-1}+1} \rightarrow \preord u_{j_{t}+1}$ is a path in the tree (adjacent to $\Pi'$).

(Subpath $\Pi''$) We prove the property by induction on the number $t$ of nodes in the subpath. If $t=1$ the claim is immediate, since by assumption the only node $\preord u_{z'}$ in the subpath is such that $z'<n$, thus $\preord u_{z'+1}$ exists. 

Let therefore $\Pi'' = \preord u_{j_1} \rightarrow \preord u_{j_2} \rightsquigarrow \preord u_{j_{t}}$ have length $t\geq 2$. By assumption, $j_{1}<n$ and no node other than (possibly) $\preord u_{j_t}$ is red.

Let $c = \lambda(\preord u_{j_2})$.
Since by assumption $\preord u_{j_1}$ is not red and $j_1<n$, then $c\in out(\preord u_{j_1+1})$. Then, by Lemma \ref{lem:prec} we have that $\preord u_{j_2+1} = child_c(\preord u_{j_1+1})$, thus $\preord u_{j_1+1} \rightarrow \preord u_{j_2+1}$ is an edge in the tree. In particular, $j_2<n$. By our inductive hypothesis, $\preord u_{j_2+1} \rightsquigarrow \preord u_{j_t+1}$ is a path in the tree. Merging these two results,  we obtain that $\preord u_{j_1+1} \rightarrow \preord u_{j_2+1} \rightsquigarrow \preord u_{j_t+1}$ is a path in the tree (adjacent to $\Pi''$). 

To conclude, we merge the two results obtained for $\Pi'$ and $\Pi''$ and obtain our claim: $\preord u_{i_1+1} \rightarrow \preord u_{i_2+1} \rightsquigarrow \preord u_{i_k+1}$ is a path in the tree (adjacent to $\Pi$).\qed

\section{Examples of Climb, Section \ref{sec:locate}}\label{app:examples climb}

\paragraph{Example of Case 1} Consider Figure \ref{fig:example1}, and suppose we want to compute $\phi(2)$. First, we find a (any) red descendant of 2: let's say we pick node 14 (the same reasoning holds with red node 3). Note that we have explicitly stored (orange dashed arrow) $\phi(14) = 6$. Note moreover that the path connecting 2 and 14 has length 1 and is labeled with string $S = b$. Lemma \ref{lem:adj paths} tells us that, along the path labeled $S$ connecting $\phi(2)=3$ and $\phi(14)=6$, the nodes are always adjacent in co-lexicographic order with the relative nodes in the path $2 \rightsquigarrow 14$. By applying our formula, we obtain 
$$
\phi(2) = LAQ(\phi(14), depth(14) - depth(2)) = LAQ(6,1) = 3
$$

\paragraph{Example of Case 2.1} Consider Figure \ref{fig:example1}, and suppose we want to compute $\phi(\preord u_i) = \phi(24)$. Node $\preord u_j = 1$ is the lowest ancestor of 24 such that the complete subtree rooted in 1 contains colored nodes. In this particular case, 1 is blue so we follow the explicit edge $\phi(1) = 2 = \preord u_{j+1}$. 
We moreover find the successor $\preord u_k$ of 1 in $\Pi = 1 \rightarrow 22 \rightsquigarrow 24$ with $t = depth(24) - depth(1) = 3$ and $\preord u_k = LAQ(24, t-1) = 22$. 
Since 1 is not red and 22 is the second child of 1, nodes 1 and 2 have the same outgoing labels and therefore the node 14 on the path $2 \rightsquigarrow 16 = \preord u_{i+1}$ must be the second child of 2.
By definition of $\preord u_j = 1$, no node in the complete subtree rooted in 22 is colored: this subtree is therefore isomorphic with the complete subtree rooted in 14. It follows that the relative position of $\preord u_{i+1} = 16$ in the subtree rooted 14 is the same as that of $\preord u_{i} = 24$ in the subtree rooted 22: we can therefore find node 16 with an isomorphic descendant query.

\paragraph{Example of Case 2.2.1} 
Consider Figure \ref{fig:example1}, and suppose we want to compute $\phi(\preord u_i) = \phi(5)$. 
Node $\preord u_j = 3$ is the lowest ancestor of 5 such that the complete subtree rooted in 3 contains a colored node.
Let $t = depth(\preord u_i) - depth(\preord u_j) = 2$. We find $\preord u_k = LAQ(\preord u_i,t-1) = 4$. 
Node $\preord u_k = 4$ is a $\phi$-sample of type 2. Then, $\phi(\preord u_k) = \preord u_{k+1} = 11$ is stored explicitly and we retrieve it in constant time. 
By definition of $\preord u_j$, no node in the subtree rooted in 4 is colored. Then, this subtree and the one rooted in 11 are isomorphic and we can find $\phi(\preord u_{i+1}) = 12$ with an isomorphic descendant query.

\paragraph{Example of Case 2.2.2} 
Consider Figure \ref{fig:example1}, and suppose we want to compute $\phi(\preord u_i) = \phi(6)$. 
Node $\preord u_j = 3$ is the lowest ancestor of 6 such that the complete subtree rooted in 3 contains a colored node. In this particular case, $\preord u_k$ coincides with $\preord u_i$, and $\phi(\preord u_k)$ is not a $\phi$-sample of type 2. In fact, as proved above, $\phi(\preord u_j) = \preord u_{j+1} = 4$ (which we retrieve in constant time, being it a $\phi$-sample of type 1) is adjacent in co-lexicographic order to node 3. Now, $\preord u_k = 6$ and $\preord u_{k+1} = 5$ are both reached by following label $b$ from $\preord u_j = 3$ and $\preord u_{j+1} = 4$, respectively. Since 3 is red and $3 <_{pred} 4$, we can find $\preord u_{k+1} = 5$ with an isomorphic child operation. Finally, as noted in the previous examples the subtrees rooted in 5 and 6 are isomorphic, so we can find $\phi(\preord u_i) = \phi(6) = 5$ in constant time with an isomorphic descendant query.

\section{Entropy bound}\label{app:entropy bound}

We show how to bound the size of our index as a function of $\mathcal H^{wc}_k$.

\begin{lemma}\label{cor:r<nHk}
For any $0 < \alpha < 1$ and $0 \leq k \leq \max\{0,\alpha\log_\sigma n-1\}$ it holds $r\leq 2\mathcal H^{wc}_k + o(n/\log^c n)$ for any constant $c>0$. 
\end{lemma}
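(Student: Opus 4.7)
The plan is to invoke Theorem \ref{th:r<nHk}, which already gives $r \leq \mathcal{H}^{wc}_k + \sigma^{k+1}$, and then dispose of the additive term $\sigma^{k+1}$ by splitting on the size of $\sigma$ relative to $n^{\alpha}$. The main work will be in the regime where $\sigma$ is large and therefore $k$ must be $0$.

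First I would treat the case $\sigma \leq n^{\alpha}$. The hypothesis $k \leq \max\{0,\alpha\log_\sigma n - 1\}$ then reads $k+1 \leq \alpha\log_\sigma n$, so $\sigma^{k+1} \leq n^{\alpha}$. Since $\alpha < 1$, the direct asymptotic computation $n^{\alpha}/(n/\log^c n) = \log^c n / n^{1-\alpha} \to 0$ shows that $\sigma^{k+1} = o(n/\log^c n)$ for every constant $c>0$, so Theorem \ref{th:r<nHk} already yields the claim (even without the factor $2$).

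The interesting case is $\sigma > n^{\alpha}$. Here $\alpha\log_\sigma n < 1$, so the constraint forces $k = 0$, and the additive $\sigma^{k+1} = \sigma$ can be as large as $\Theta(n)$, which is not $o(n/\log^c n)$. My plan to absorb it is to establish the structural lower bound $\mathcal{H}^{wc}_0 \geq \sigma$; this is precisely what justifies the factor $2$ in the statement. For this, note that effectiveness of the alphabet gives $n'_c \geq 1$ for every $c\in\Sigma$, and since $\sum_{c\in\Sigma} n'_c = n-1$ with $\sigma \geq 2$, each $n'_c$ is also at most $n-1$. The elementary inequality $\binom{n}{k} \geq n$ for every $1 \leq k \leq n-1$ then yields $\log_2\binom{n}{n'_c} \geq \log_2 n \geq 1$ per character, and summing over $c\in\Sigma$ gives $\mathcal{H}^{wc}_0 \geq \sigma$. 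The degenerate subcase $\sigma=1$ forces the trie to be a path with $r=1$ and $\mathcal{H}^{wc}_0 = \log_2 n$, and the bound holds by inspection for $n \geq 2$. Combining with Theorem \ref{th:r<nHk} I obtain $r \leq \mathcal{H}^{wc}_0 + \sigma \leq 2\mathcal{H}^{wc}_0$, which is the statement for $k=0$.

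The only non-obvious ingredient will be the lower bound $\mathcal{H}^{wc}_0 \geq \sigma$: this is the reason the factor $2$ (rather than $1 + o(1)$) appears, and it is what handles the regime where the alphabet is too large to absorb $\sigma^{k+1}$ into $o(n/\log^c n)$ directly. Everything else will be routine bookkeeping around Theorem \ref{th:r<nHk} and the asymptotic $n^{\alpha}\log^c n = o(n)$ for $\alpha < 1$.
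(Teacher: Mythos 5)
Your approach is essentially the same as the paper's: dispose of the $\sigma^{k+1}$ term when $k\geq 1$ by bounding it by $n^\alpha$, and in the forced $k=0$ regime absorb $\sigma$ by proving $\mathcal H^{wc}_0$ dominates it. There is, however, a small but real slip in the key lower bound. You claim that effectiveness gives $n'_c\geq 1$ for \emph{every} $c\in\Sigma$ and therefore $\log_2\binom{n}{n'_c}\geq 1$ for each of the $\sigma$ characters, yielding $\mathcal H^{wc}_0\geq\sigma$. But $\#\in\Sigma$ by definition, and $\#$ never labels an edge, so it never appears in any $out(\preord u_i)$: the term for $c=\#$ is $\log_2\binom{n'}{0}=0$, not $\geq 1$. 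The correct conclusion is $\mathcal H^{wc}_0\geq\sigma-1$, which is exactly what the paper derives (after separately disposing of $\sigma=2$, where the trie is a unary path, the non-$\#$ character has $n_c=n-1$, and the binomial coefficient is $1$, so $\mathcal H^{wc}_0=0$ and the lower bound breaks down; the paper handles this as $r=1\leq 2\cdot 0+1$). This weakens your final step to $r\leq\mathcal H^{wc}_0+\sigma\leq 2\mathcal H^{wc}_0+1$ rather than $r\leq 2\mathcal H^{wc}_0$, but since the lemma allows an additive $o(n/\log^c n)$ term, the $+1$ is harmless and the proof survives. Your $\sigma=1$ remark is also inaccurate as written (if $\sigma=1$ then $\Sigma=\{\#\}$, no edges exist, and the trie is a single node, not a path), but this case never arises once $n\geq 2$.
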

\begin{proof}
 
If $0 < \alpha < 1$ and $0\leq k \leq \alpha\log_\sigma n-1$, then the corollary follows immediately from Theorem \ref{th:r<nHk}: $r \leq \mathcal H^{wc}_k + \sigma^{\alpha\log_\sigma n} = \mathcal H^{wc}_k + n^{\alpha} \leq \mathcal H^{wc}_k + o(n/\log^c n)$ for any constant $c>0$.

However, for large $\sigma$ the interval $[0,\alpha\log_\sigma n-1]$ could be empty. To prove the claim, we have to give a useful bound in the case $k=0$.
In this case, Theorem \ref{th:r<nHk} yields $r \leq \mathcal H^{wc}_0 + \sigma$. The problem is that $\sigma$ could be $\Theta(n)$; the solution is to note that, in this case, also $\mathcal H^{wc}_0$ must be large. 
In the following we prove that the bound $r\leq 2\mathcal H^{wc}_0 + 1$ holds. This will prove the claim.

We can assume the number of nodes to be $n\geq 2$, otherwise the tree is either empty or composed of the root only and both $r$ and $\mathcal H^{wc}_0$ are equal to 0.
We can moreover assume $\sigma \geq 2$, since character $\#$ does not label any edge and there are at least $2$ nodes. 

Let $n_c$ be the number of edges labeled $c$.
Note that $n_{\#}=0$ since $\#$ does not label any edge.
By definition, $\mathcal H^{wc}_0 = \sum_{c\in\Sigma} \log_2 {{n-1}\choose{n_c}} = \sum_{c\in\Sigma-\{\#\}} \log_2 {{n-1}\choose{n_c}}$.

If $\sigma = 2$, then the tree is a unary path, $r=1$, and $\mathcal H^{wc}_0 = 0$. The claim follows. 
We can therefore assume $\sigma \geq 3$.
Since we assume the alphabet to be \emph{effective} we have then $n-1\geq 2$ and $1 \leq n_c < n-1$, therefore ${{n-1}\choose{n_c}} \geq 2$ for every $c\neq \#$. 
It follows that $\mathcal H^{wc}_0 = \sum_{c\in\Sigma-\{\#\}} \log_2 {{n-1}\choose{n_c}} \geq \sigma - 1$. Re-arranging terms, this becomes $\sigma \leq \mathcal H^{wc}_0 + 1$. Plugging this into 
the bound $r\leq \mathcal H^{wc}_0 + \sigma$ of Theorem \ref{th:r<nHk} we obtain  our claim. \qed 
\end{proof}

Combining Lemma \ref{cor:r<nHk} and Theorem \ref{th:locate}, we obtain:

\begin{corollary}
	Let $\mathcal T$ be a trie with $n$ nodes, and let $\mathcal H^{wc}_k$ be the $k$-th order worst-case entropy of $\mathcal T$ for any $0 \leq k \leq \max\{0,\alpha\log_\sigma n-1\}$ and $0 < \alpha < 1$.
	The index of Theorem \ref{th:locate} takes $2n + o(n) + O(\mathcal H^{wc}_k\log n)$ bits of space and locates the pre-order identifiers of the $occ$ nodes reached by a path labeled with  $P\in\Sigma^m$ in $O\left(m\log\sigma + occ\right)$ time. 
\end{corollary}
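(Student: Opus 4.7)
The plan is to derive the corollary as a direct substitution: take the data structure guaranteed by Theorem~\ref{th:locate}, whose space is $2n + o(n) + O(r\log n)$ bits and whose locate time is $O(m\log\sigma + occ)$, and replace $r$ with the entropy bound supplied by Lemma~\ref{cor:r<nHk}. Since the query time of Theorem~\ref{th:locate} does not depend on $r$, nothing more is required on the time side; the entire content of the proof is in arithmetic on the space term.

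First I would invoke Lemma~\ref{cor:r<nHk}, which for the stated range $0\leq k \leq \max\{0,\alpha\log_\sigma n - 1\}$ and $0<\alpha<1$ gives $r \leq 2\mathcal H^{wc}_k + o(n/\log^c n)$ for every constant $c>0$. Multiplying both sides by $\log n$ yields
\[
O(r\log n) \;\subseteq\; O(\mathcal H^{wc}_k \log n) + o\!\left(\tfrac{n\log n}{\log^c n}\right).
\]
Choosing $c=2$ (any $c\geq 2$ works), the second term becomes $o(n/\log n) \subseteq o(n)$, so the $O(r\log n)$ summand of Theorem~\ref{th:locate} is dominated by $O(\mathcal H^{wc}_k\log n) + o(n)$ bits. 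Substituting this into $2n + o(n) + O(r\log n)$ yields $2n + o(n) + O(\mathcal H^{wc}_k\log n)$, which is exactly the claimed space bound.

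Finally I would note that the time complexity $O(m\log\sigma + occ)$ is inherited verbatim from Theorem~\ref{th:locate}: the bound on $r$ is only used to control space, and the locate algorithm itself is unchanged. Since Lemma~\ref{cor:r<nHk} is already proved in full generality (treating both the $k\geq 1$ and the delicate $k=0$ case where $\sigma$ could be as large as $\Theta(n)$), there is no hidden obstacle; the only thing to be careful about is that the $o(n/\log^c n)$ slack in Lemma~\ref{cor:r<nHk} is strong enough to absorb the extra $\log n$ factor introduced by the space formula, which is why $c$ is chosen strictly greater than~$1$. This is the only subtle step, and it is resolved by the freedom in choosing $c$.
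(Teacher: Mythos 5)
Your proposal is correct and follows exactly the route the paper takes: the paper presents this corollary as an immediate combination of Lemma~\ref{cor:r<nHk} and Theorem~\ref{th:locate}, and your substitution of $r \leq 2\mathcal H^{wc}_k + o(n/\log^c n)$ into the $O(r\log n)$ space term, choosing $c\geq 2$ to absorb the extra $\log n$ factor into the existing $o(n)$ summand, is precisely the calculation the paper leaves implicit. Nothing is missing; the time bound is indeed inherited verbatim since the query algorithm does not depend on how $r$ is bounded.
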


%
%
 \bibliographystyle{splncs04}
 \bibliography{biblio}

\end{document}